\documentclass[12pt]{amsart}

\hoffset=-0.8in
\textheight=8.8in
\textwidth=6.6in

\usepackage{amsmath,amssymb,amsthm,textcomp}
\usepackage{amsfonts,graphicx}
\usepackage[mathscr]{eucal}
\pagestyle{plain}
\usepackage{color}

\theoremstyle{definition}


\numberwithin{equation}{section}


\newcommand{\ncom}{\newcommand}

\ncom{\beq}{\begin{equation}}
\ncom{\eeq}{\end{equation}}
\ncom{\bea}{\begin{eqnarray*}}
\ncom{\eea}{\end{eqnarray*}}
\ncom{\beqa}{\begin{eqnarray}}
\ncom{\eeqa}{\end{eqnarray}}
\ncom{\nno}{\nonumber}
\ncom{\non}{\nonumber}
\ncom{\ds}{\displaystyle}
\ncom{\half}{\frac{1}{2}}
\ncom{\mbx}{\makebox{.25cm}}
\ncom{\hs}{\mbox{\hspace{.25cm}}}
\ncom{\rar}{\rightarrow}
\ncom{\Rar}{\Rightarrow}
\ncom{\noin}{\noindent}
\ncom{\bc}{\begin{center}}
\ncom{\ec}{\end{center}}
\ncom{\sz}{\scriptsize}
\ncom{\rf}{\ref}
\ncom{\s}{\sqrt{2}}
\ncom{\sgm}{\sigma}
\ncom{\Sgm}{\Sigma}
\ncom{\psgm}{\sigma^{\prime}}
\ncom{\dt}{\delta}
\ncom{\Dt}{\Delta}
\ncom{\lmd}{\lambda}
\ncom{\Lmd}{\Lambda}
\ncom{\Th}{\Theta}
\ncom{\e}{\eta}
\ncom{\eps}{\epsilon}
\ncom{\pcc}{\stackrel{P}{>}}
\ncom{\lp}{\stackrel{L_{p}}{>}}
\ncom{\dist}{{\rm\,dist}}
\ncom{\sspan}{{\rm\,span}}
\ncom{\re}{{\rm Re\,}}
\ncom{\im}{{\rm Im\,}}
\ncom{\sgn}{{\rm sgn\,}}
\ncom{\ba}{\begin{array}}
\ncom{\ea}{\end{array}}
\ncom{\hone}{\mbox{\hspace{1em}}}
\ncom{\htwo}{\mbox{\hspace{2em}}}
\ncom{\hthree}{\mbox{\hspace{3em}}}
\ncom{\hfour}{\mbox{\hspace{4em}}}
\ncom{\vone}{\vskip 2ex}
\ncom{\vtwo}{\vskip 4ex}
\ncom{\vonee}{\vskip 1.5ex}
\ncom{\vthree}{\vskip 6ex}
\ncom{\vfour}{\vspace*{8ex}}
\ncom{\norm}{\|\;\;\|}
\ncom{\integ}[4]{\int_{#1}^{#2}\,{#3}\,d{#4}}
\ncom{\vspan}[1]{{{\rm\,span}\{ #1 \}}}
\ncom{\dm}[1]{ {\displaystyle{#1} } }
\ncom{\ri}[1]{{#1} \index{#1}}

\newtheorem{theorem}{\bf Theorem}[section]
\newtheorem{remark}{\bf Remark}[section]
\newtheorem{proposition}{Proposition}[section]
\newtheorem{lemma}{Lemma}[section]

\newtheorem{example}{Example}[section]
\newtheorem{definition}{Definition}[section]
\newtheoremstyle
    {remarkstyle}
    {}
    {11pt}
    {}
    {}
    {\bfseries}
    {:}
    {     }
    {\thmname{#1} \thmnumber{#2} }

\theoremstyle{remarkstyle}



\def\eps{\varepsilon}

\begin{document}
\title{\Large O\lowercase{n the occurrence of boundary solutions in two-way incomplete tables}}
\author[Sayan Ghosh]{S. Ghosh}
\address{Sayan Ghosh, Department of Mathematics,
 Indian Institute of Technology Bombay, Powai, Mumbai 400076, INDIA.}
 \email{sayang@math.iitb.ac.in}
\author{P. Vellaisamy}
\address{P. Vellaisamy, Department of Mathematics,
Indian Institute of Technology Bombay, Powai, Mumbai 400076, INDIA.}
\email{pv@math.iitb.ac.in}
\thanks{The research of S. Ghosh was supported by UGC, Govt. of India grant F.2-2/98 (SA-I)}
\subjclass[2010]{Primary : 62H17}
\keywords{Incomplete tables; Boundary solutions; Log-linear models; NMAR models.}
\begin{abstract}
\noindent The analysis of incomplete contingency tables is an important problem, which is also of practical interest. In this paper, we consider boundary solutions under nonignorable nonresponse models in two-way incomplete tables with data on both variables missing. We establish a result similar to Park {\it et al.}~(2014) on sufficient conditions for the occurrence of boundary solutions. We also provide a new result, which connects the forms of boundary solutions under various parameterizations of the missing data models. This result helps us to obtain the exact form of boundary solutions in the above tables, which improves a claim made in Baker {\it et al.} (1992) and avoids computational burden. A counterexample is provided to show that the sufficient conditions for the occurrence of boundary solutions are not necessary, thereby disproving a conjecture of Kim and Park (2014). Finally, we establish new necessary conditions for the occurrence of boundary solutions under nonignorable nonresponse models in square two-way incomplete tables, and show that they are not sufficient. These conditions are simple and easy to check as they depend only on the observed cell counts. They are useful and important for model selection also. Some real life data sets are analyzed to illustrate the results.
\end{abstract}

\maketitle
\vspace*{-0.7cm}

\section{Introduction}
Contingency tables with fully observed counts and partially classified margins (nonresponses) are called incomplete tables. The following three types of missing data mechanisms have been proposed in the literature (Little and Rubin (2002)): missing completely at random (MCAR), missing at random (MAR) and not missing at random (NMAR). The missing mechanism is said to be (a) MCAR when missingness is independent of both observed and unobserved data, (b) MAR when missingness depends only on observed data, and (c) NMAR if missingness depends on unobserved data. Nonresponses are called ignorable when the missing data mechanism is MAR or MCAR, and the parameters governing the missing data mechanism are distinct from those to be estimated. They are nonignorable when the missing data mechanism is NMAR. 

Log-linear models have generally been used to study missing data mechanisms in incomplete tables (see Park {\it et al.}~(2014) and references therein). However, under nonignorable models, a boundary solution occurs when the cell probabilities of non-respondents are estimated to be zeros for certain levels of the missing variables. That is, the maximum likelihood estimators (MLE's) of the parameters lie on the boundary of the parameter space. Note that the problem of boundary solutions is an important one as it has serious consequences for statistical inference. For example, the observed counts cannot be reproduced by a perfect fit model (a model for which the estimated expected counts are equal to the observed counts) if boundary solutions occur. This implies that the fit is inadequate and the parameter estimates are imprecise. The log likelihood function is flat and, therefore, convergence of the EM algorithm to the boundary MLE's requires a lot of iterations. Also, the eigenvalues of the covariance matrix are inappropriate (either around zero or negative), which implies some parameter estimates have large estimated standard errors and wide confidence intervals. Hence, it is useful to study various forms of boundary solutions and explore conditions for their occurrence in incomplete tables.

Consider two categorical variables with $I$ and $J$ levels. Then an $I\times J\times 2$ table and an $I\times J\times 2\times 2$ table represent two-way incomplete tables with data on one of the variables and data on both the variables missing respectively. The problem of boundary solutions was first considered by Baker and Laird (1988) who proposed a sufficient condition for their occurrence in a $2\times 2\times 2$ incomplete table. Baker {\it et al.}~(1992) studied the problem for an $I\times J\times 2\times 2$ incomplete table, which has non-monotone missing value patterns. For an $I\times J\times 2$ incomplete table with simple monotone missing value patterns, Smith {\it et al.}~(1999) and Clarke (2002) described the problem geometrically, while Clarke and Smith (2005) discussed properties of MLE's in case of boundary solutions. Park {\it et al.}~(2014) proposed sufficient conditions for the occurrence of boundary solutions under various NMAR models in an $I\times I\times 2\times 2$ incomplete table. Recently, Ghosh and Vellaisamy (2016) provided forms of boundary solutions in arbitrary three-way and $n$-dimensional incomplete tables with one or more variables missing, and also established sufficient conditions for their occurrence under various NMAR models. In this paper, we consider the above and other related issues for an $I\times J\times 2\times 2$ table. Note that a lower dimensional incomplete table is not a special case of a higher dimensional one and hence any result for the former cannot be obtained directly from that for the latter. 

The purpose of this paper is to provide a comprehensive treatment of the problem of boundary solutions in two-way incomplete tables with both variables missing. To this effect, we first introduce some notations and consider various identifiable NMAR log-linear models (Models [M1]-[M5]) for an $I\times J\times 2\times 2$ incomplete table. The problem of boundary solutions, along with their forms under the above models, is discussed in Section 3. We formally define boundary solutions for an $I\times J\times 2\times 2$ incomplete table by extending the definition of Baker and Laird (1988), which are unavailable in the literature. A novel result (Proposition 3.1) is provided, which gives the relationship among forms of boundary solutions according to various parameterizations for the missing data models. This helps us to theoretically justify and deduce the exact boundary solutions in those models directly without having to obtain them empirically (see pp.~39-40 of Park {\it et al.}~(2014)) using the EM algorithm. In Section 4, we illustrate this result using some data analysis examples from Baker {\it et al.} (1992), thereby improving a claim made by them on the forms of boundary solutions in $I\times J\times 2\times 2$ tables, which also eliminates computations.

In Section 5, we provide a result (Theorem 5.1) on sufficient conditions for the occurrence of boundary solutions in the above tables, which is similar to Theorem 1 of Park {\it et al.}~(2014) but proved using direct arguments instead of contrapositive ones used in Park {\it et al.}~(2014). While Park {\it et al.}~(2014) consider only Model [M5] in Theorem 1, we consider Models [M1]-[M5] in Theorem 5.1. A counterexample is provided to show that the sufficient conditions for the occurrence of boundary solutions are not necessary, which refutes a conjecture due to Kim and Park (2014).

Finally, we propose new necessary conditions in Theorem 5.2 for the occurrence of boundary solutions under Models [M1]-[M5] in square two-way incomplete tables, and later show that they are not sufficient through a counterexample. Such conditions do not exist in the literature. Note that these conditions help us to identify the non-occurrence of boundary solutions, which is very useful for fitting appropriate models to the incomplete data (model selection). Also, these conditions involve only the observed cell counts and their sums in the tables, and hence can be easily verified. Section 6 provides some concluding remarks.

\section{NMAR log-linear models}
Suppose $Y_{1}$ and $Y_{2}$ are two categorical variables having $I$ and $J$ levels respectively. For $i=1,2,$ let $R_{i}$ denote the missing indicator for $Y_{i}$ so that $R_{i} = 1$ or $2$ if $Y_{i}$ is observed or unobserved. Then we have an $I\times J\times 2\times 2$ incomplete table, corresponding to $Y_{1},~Y_{2},~R_{1}$ and $R_{2}$, with cell counts ${\bf y} = \{y_{ijkl}\}$ where $1\leq i\leq I,~1\leq j\leq J$ and $1\leq k,l\leq 2.$ The vector of observed counts is ${\bf y_{\textrm{obs}}} = (\{y_{ij11}\},\{y_{i+12}\},\{y_{+j21}\},y_{++22}),$ where $\{y_{ij11}\}$ are the fully observed counts and $\{y_{i+12}\},\{y_{+j21}\},y_{++22}$ are the partially classified counts also known as the supplementary margins. All cell counts are assumed to be positive. The fully observed counts are those for which data on both $Y_{1}$ and $Y_{2}$ is available, while data on at most $Y_{1}$ or $Y_{2}$  is available for the supplementary margins. Note that `+' denotes summation over levels of the corresponding variable. For example, $y_{+j21}$ denotes the number of observations corresponding to $Y_{2}=j$ for which data on $Y_{2}$ is observed but data on $Y_{1}$ is missing. Let ${\bf\pi} = \{\pi_{ijkl}\}$ be the vector of cell probabilities, $\mu = \{\mu_{ijkl}\}$ be the vector of expected counts and $N = \sum_{i,j,k,l}y_{ijkl}$ the total number of cell counts. 
For $I = J = 2$, we have the $2\times 2\times 2\times 2$ incomplete table (Table 1). 
\vone\noindent   
{\bf Table 1}. $2\times 2\times 2\times 2$ Incomplete Table.
\begin{center}
$
\begin{array}{|c|c|cc|c|}\hline
& & R_{2} = 1 & & R_{2} = 2 \\ \hline
& & Y_{2} = 1 & Y_{2} = 2 & Y_{2}~\textrm{missing} \\ \hline
R_{1} = 1 & Y_{1} = 1 & y_{1111} & y_{1211} & y_{1+12} \\
& Y_{1} = 2 & y_{2111} & y_{2211} & y_{2+12} \\ \hline
R_{1} = 2 & Y_{1}~\textrm{missing} & y_{+121} & y_{+221} & y_{++22} \\ \hline
\end{array}
$
\end{center}
We consider Poisson sampling for convenience, that is, $Y_{ijkl}\sim P(\mu_{ijkl})$. The likelihood function of ${\bf\mu}$ is 
\begin{eqnarray}\label{lik}
L({\bf\mu};{\bf y}_{\textrm{obs}}) &=& \frac{e^{-\sum_{i,j,k,l}\mu_{ijkl}}\prod_{i,j}\mu_{ij11}^{y_{ij11}}\prod_{i}\mu_{i+12}^{y_{i+12}}\prod_{j}\mu_{+j21}^{y_{+j21}}\mu_{++22}^{y_{++22}}}{\prod_{i,j,k,l}y_{ijkl}!}
\end{eqnarray}
so that the log-likelihood function of ${\bf\mu}$ is
\begin{eqnarray}\label{eq2.2}
l({\bf\mu};{\bf y}_{\textrm{obs}}) &=& \sum_{i,j}y_{ij11}\log \mu_{ij11} + \sum_{i}y_{i+12}\log \mu_{i+12} + \sum_{j}y_{+j21}\log \mu_{+j21} \nonumber \\ 
& & + y_{++22}\log \mu_{++22} - \mu_{++++} + \Delta,
\end{eqnarray}
where $\Delta$ is independent of $\mu_{ijkl}$'s. 
For an $I\times J\times 2\times 2$ incomplete table, Baker {\it et al.}~(1992) proposed the following log-linear model (with no three-way or four-way interactions):
\begin{eqnarray}\label{eq1}
\log \mu_{ijkl} &=& \lambda + \lambda_{Y_{1}}(i) + \lambda_{Y_{2}}(j) + \lambda_{R_{1}}(k) + \lambda_{R_{2}}(l) + \lambda_{Y_{1}Y_{2}}(i,j)   \nonumber \\
& & + \lambda_{Y_{1}R_{1}}(i,k) + \lambda_{Y_{2}R_{1}}(j,k) + \lambda_{Y_{1}R_{2}}(i,l) + \lambda_{Y_{2}R_{2}}(j,l) + \lambda_{R_{1}R_{2}}(k,l),
\end{eqnarray}
where the sum over any argument of a log-linear parameter is zero, for example, $\sum_{i}\lambda_{Y_{1}Y_{2}}(i,j) = \sum_{j}\lambda_{Y_{1}Y_{2}}(i,j) = 0$. To study the various missing mechanisms of $Y_{1}$ and $Y_{2}$, Baker {\it et al.} (1992) introduced the following notations:
\begin{eqnarray*}
a_{ij} &=& \frac{P(R_{1}=2,R_{2}=1|Y_{1}=i,Y_{2}=j)}{P(R_{1}=1,R_{2}=1|Y_{1}=i,Y_{2}=j)} = \frac{\pi_{ij21}}{\pi_{ij11}} = \frac{\mu_{ij21}}{\mu_{ij11}}, \nonumber \\
b_{ij} &=& \frac{P(R_{1}=1,R_{2}=2|Y_{1}=i,Y_{2}=j)}{P(R_{1}=1,R_{2}=1|Y_{1}=i,Y_{2}=j)} = \frac{\pi_{ij12}}{\pi_{ij11}} = \frac{\mu_{ij12}}{\mu_{ij11}},\nonumber \\
m_{ij11} &=& N\pi_{ij11},~g = \frac{P(R_{1}=1,R_{2}=1|Y_{1}=i,Y_{2}=j)P(R_{1}=2,R_{2}=2|Y_{1}=i,Y_{2}=j)}{P(R_{1}=1,R_{2}=2|Y_{1}=i,Y_{2}=j)P(R_{1}=2,R_{2}=1|Y_{1}=i,Y_{2}=j)}.
\end{eqnarray*}
\begin{remark}\label{rem0}
Under (\ref{eq1}), it can be shown that $a_{ij} = \exp[-2\{\lambda_{R_{1}}(1) + \lambda_{Y_{1}R_{1}}(i,1) + \lambda_{Y_{2}R_{1}}(j,1) + \lambda_{R_{1}R_{2}}(1,1)\}]$ and $b_{ij} = \exp[-2\{\lambda_{R_{2}}(1) + \lambda_{Y_{1}R_{2}}(i,1) + \lambda_{Y_{2}R_{2}}(j,1) + \lambda_{R_{1}R_{2}}(1,1)\}]$. Also, we have $g = \frac{\pi_{ij11}\pi_{ij22}}{\pi_{ij12}\pi_{ij21}} = \frac{\mu_{ij11}\mu_{ij22}}{\mu_{ij12}\mu_{ij21}}$. Hence 
\begin{align*}
\log g &= \log\mu_{ij11} + \log\mu_{ij22} - \log\mu_{ij12} - \log\mu_{ij21} \\
\Rightarrow \log g &= \lambda_{R_1R_2}(1,1) + \lambda_{R_1R_2}(2,2) - \lambda_{R_1R_2}(1,2) - \lambda_{R_1R_2}(2,1)\quad(\textrm{from}~(\ref{eq1})) \\
\Rightarrow \log g &= 4\lambda_{R_1R_2}(1,1) ~(\because \lambda_{R_1R_2}(1,2) = \lambda_{R_1R_2}(2,1) = -\lambda_{R_1R_2}(1,1);\lambda_{R_1R_2}(2,2) = -\lambda_{R_1R_2}(1,2))\\
\Rightarrow g &= \exp[4\lambda_{R_1R_2}(1,1)], 
\end{align*}
which is independent of $i$ and $j$.
\end{remark}
Note that $m_{ij11}=\mu_{ij11}$ and $g$ denotes the odds ratio between the missing indicators of $Y_{1}$ and $Y_{2}$. Also, $\mu_{ij21}=m_{ij11}a_{ij}$, $\mu_{ij12}=m_{ij11}b_{ij}$ and $\mu_{ij22}=m_{ij11}a_{ij}b_{ij}g$. Note that $a_{ij}$ is the conditional odds of $Y_{1}$ being missing given $Y_{2}$ is observed, while $b_{ij}$ is the conditional odds of $Y_{2}$ being missing given $Y_{1}$ is observed. Here, $a_{ij}$ and $b_{ij}$ describe the missing mechanisms of $Y_{1}$ and $Y_{2}$,  respectively. Denote $a_{ij}$ $(b_{ij})$ by $\alpha_{i.}$ $(\beta_{i.})$ or $\alpha_{.j}$ $(\beta_{.j})$ or $\alpha_{..}$ $(\beta_{..})$ if it depends only on $i$ or $j$ or none, respectively. Then we have the following definition.
\begin{definition}\label{def0}
The missing mechanism of $Y_{1}$ under (\ref{eq1}) is NMAR if $a_{ij} = \alpha_{i.}$, MAR if $a_{ij} = \alpha_{.j}$ and MCAR if $a_{ij} = \alpha_{..}$. Similarly, the missing mechanism of $Y_{2}$ is NMAR if $b_{ij} = \beta_{.j}$, MAR if $b_{ij} = \beta_{i.}$ and MCAR if $b_{ij} = \beta_{..}$.
\end{definition}
Using Definition \ref{def0} and the above notations, there are nine possible identifiable models (see pp.~647-648 of Baker {\it et al.}~(1992)) based on different missing mechanisms for $Y_{1}$ and $Y_{2}$. The equivalent log-linear models can be obtained as submodels of (\ref{eq1}). As an example, consider the model $(\alpha_{i.},\beta_{i.})$, for which the missing mechanism is NMAR for $Y_{1}$ and MAR for $Y_{2}$. Using the expressions of $a_{ij}$ and $b_{ij}$ above, the corresponding log-linear model is obtained from (\ref{eq1}) by substituting $\lambda_{Y_{2}R_{1}}(j,k)=\lambda_{Y_{2}R_{2}}(j,l)=0$. The following are the five models when the missing mechanism is NMAR for $Y_{1}$ or $Y_{2}$.
\vone\noindent
1. Model M1 (NMAR for $Y_{1}$, MCAR for $Y_{2}$): 
\begin{equation*}
\log \mu_{ijkl} = \lambda + \lambda_{Y_{1}}(i) + \lambda_{Y_{2}}(j) + \lambda_{R_{1}}(k) + \lambda_{R_{2}}(l) + \lambda_{Y_{1}Y_{2}}(i,j) + \lambda_{Y_{1}R_{1}}(i,k) + \lambda_{R_{1}R_{2}}(k,l)
\end{equation*} 
2. Model M2 (NMAR for $Y_{2}$, MCAR for $Y_{1}$): 
\begin{equation*}
\log \mu_{ijkl} = \lambda + \lambda_{Y_{1}}(i) + \lambda_{Y_{2}}(j) + \lambda_{R_{1}}(k) + \lambda_{R_{2}}(l) + \lambda_{Y_{1}Y_{2}}(i,j) + \lambda_{Y_{2}R_{2}}(j,l) + \lambda_{R_{1}R_{2}}(k,l) 
\end{equation*}
3. Model M3 (NMAR for $Y_{1}$, MAR for $Y_{2}$): 
\begin{equation*}
\log \mu_{ijkl} = \lambda + \lambda_{Y_{1}}(i) + \lambda_{Y_{2}}(j) + \lambda_{R_{1}}(k) + \lambda_{R_{2}}(l) + \lambda_{Y_{1}Y_{2}}(i,j) + \lambda_{Y_{1}R_{1}}(i,k) + \lambda_{Y_{1}R_{2}}(i,l) + \lambda_{R_{1}R_{2}}(k,l) 
\end{equation*}
4. Model M4 (NMAR for $Y_{2}$, MAR for $Y_{1}$):
\begin{equation*}
\log \mu_{ijkl} = \lambda + \lambda_{Y_{1}}(i) + \lambda_{Y_{2}}(j) + \lambda_{R_{1}}(k) + \lambda_{R_{2}}(l) + \lambda_{Y_{1}Y_{2}}(i,j) + \lambda_{Y_{2}R_{1}}(j,k) + \lambda_{Y_{2}R_{2}}(j,l) + \lambda_{R_{1}R_{2}}(k,l)
\end{equation*}
5. Model M5 (NMAR for both $Y_{1}$ and $Y_{2}$): 
\begin{equation*}
\log \mu_{ijkl} = \lambda + \lambda_{Y_{1}}(i) + \lambda_{Y_{2}}(j) + \lambda_{R_{1}}(k) + \lambda_{R_{2}}(l) + \lambda_{Y_{1}Y_{2}}(i,j) + \lambda_{Y_{1}R_{1}}(i,k) + \lambda_{Y_{2}R_{2}}(j,l) + \lambda_{R_{1}R_{2}}(k,l)
\end{equation*} 
Note that for Models [M1]-[M5], there is an  association term between a variable and its missing indicator if the missing mechanism is NMAR for that variable (for example, the term $\lambda_{Y_{1}R_{1}}(i,k)$ in Model [M1]), between a variable and the other missing indicator if the missing mechanism is MAR for that variable (for example, the term $\lambda_{Y_{2}R_{1}}(j,k)$ in Model [M4]) and none if the missing mechanism is MCAR for a variable (for example, $\lambda_{Y_{1}R_{1}}(i,k)$ and $\lambda_{Y_{2}R_{1}}(j,k)$ are absent in Model [M2]). 

\section{Boundary solutions in NMAR models}
In this section, we consider boundary solutions under non-ignorable nonresponse (NMAR) models for an $I\times J\times 2\times 2$ incomplete table. We first define boundary solutions under the above models and then present a result relating the forms of boundary solutions in terms of various parameterizations of the models. 

For an incomplete table, boundary solutions in NMAR models occur when the MLE's of nonresponse cell probabilities are all zeros for certain levels of the missing variables. For an $I\times J\times 2$ incomplete table, where data on only $Y_{2}$ is missing, Baker and Laird (1988) defined boundary solutions in the NMAR model for $Y_{2}$ as $\hat{\pi}_{ij2} = 0$ for at least one pair $(i,j)$. For the same model, Clarke and Smith (2005) showed that boundary solutions are given by $\hat{\pi}_{+j2} = 0$ for at least one and at most $(J-1)$ values of $Y_{2}$. Baker and Laird (1988) defined a nonresponse boundary solution under NMAR models in general to be a stationary point that lies on a boundary of the space of parameters modeling the nonignorable nonresponse. Using this, we may extend their definition to an $I\times J\times 2\times 2$ table as follows.
\begin{definition}\label{def1}
Consider an $I\times J\times 2\times 2$ incomplete table, and let $1\leq i\leq I$, $1\leq j\leq J$ and $k,l = 1,2$. Then we have the following. \\
1. A nonresponse boundary solution under the NMAR models for $Y_{1}$ only, that is, Models [M1] and [M3] is an MLE given by $\hat{\pi}_{ij2l} = 0$ for at least one combination $(i,j,l)$. \\
2. A nonresponse boundary solution under the NMAR models for $Y_{2}$ only, that is, Models [M2] and [M4] is an MLE given by $\hat{\pi}_{ijk2} = 0$ for at least one combination $(i,j,k)$. \\
3. A nonresponse boundary solution under the NMAR model for both $Y_{1}$ and $Y_{2}$, that is, Model [M5] is an MLE given by $\hat{\pi}_{ij2l} = 0$ for at least one combination $(i,j,l)$ or $\hat{\pi}_{ijk2} = 0$ for at least one combination $(i,j,k)$.
\end{definition}
Note that in the literature, boundary solutions have usually been defined in terms of cell probabilities because the cell probabilities are in some sense natural to the model for the incomplete table, whereas the loglinear parameters are not. The next proposition explores the relationships among boundary solutions under Models [M1]-[M5] in terms of MLE's of nonresponse cell probabilities, some specific log-linear parameters and $\alpha_{i.}$ or $\beta_{.j}$ for two-way incomplete tables with both variables missing.  
\begin{proposition}\label{prop1}
For an $I\times J\times 2\times 2$ incomplete table, we have the following. \\
1. For Models [M1] and [M3], if boundary solutions occur, then they are given by $\hat{\lambda}_{Y_{1}R_{1}}(i,2) = -\infty\Leftrightarrow\hat{\pi}_{i+2+} = 0\Leftrightarrow\hat{\alpha}_{i.} = 0$ for at least one and at most $(I-1)$ values of $Y_{1}$. \\
2. For Models [M2] and [M4], if boundary solutions occur, then they are given by $\hat{\lambda}_{Y_{2}R_{2}}(j,2) = -\infty\Leftrightarrow\hat{\pi}_{+j+2} = 0\Leftrightarrow\hat{\beta}_{.j} = 0$ for at least one and at most $(J-1)$ values of $Y_{2}$. \\
3. For Model [M5], if boundary solutions occur, then they are given by $\hat{\lambda}_{Y_{1}R_{1}}(i,2) = -\infty$ or $\hat{\lambda}_{Y_{2}R_{2}}(j,2) = -\infty\Leftrightarrow\hat{\pi}_{i+2+} = 0$ or $\hat{\pi}_{+j+2} = 0\Leftrightarrow\hat{\alpha}_{i.} = 0$ for at least one and at most $(I-1)$ values of $Y_{1}$ or $\hat{\beta}_{.j} = 0$ for at least one and at most $(J-1)$ values of $Y_{2}$. 
\end{proposition}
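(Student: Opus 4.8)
\section*{Proof proposal for Proposition \ref{prop1}}

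The plan is to work in the $(m_{ij11},a_{ij},b_{ij},g)$ parameterization of Baker {\it et al.}~(1992), in which $\mu_{ij21}=m_{ij11}a_{ij}$, $\mu_{ij12}=m_{ij11}b_{ij}$ and $\mu_{ij22}=m_{ij11}a_{ij}b_{ij}g$, and to reduce every assertion to a statement about the single nonignorable factor of each model. For Models [M1] and [M3] the mechanism of $Y_{1}$ is NMAR, so by Definition \ref{def0} and the displayed expression for $a_{ij}$ the term $\lambda_{Y_{2}R_{1}}(j,1)$ is absent and $a_{ij}=\alpha_{i.}$ depends only on $i$; symmetrically, for Models [M2] and [M4] we get $b_{ij}=\beta_{.j}$, and for Model [M5] both reductions hold at once. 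I would first record the two elementary summation identities
\[
\mu_{i+2+}=\alpha_{i.}\sum_{j}m_{ij11}(1+b_{ij}g),\qquad \mu_{+j+2}=\beta_{.j}\sum_{i}m_{ij11}(1+a_{ij}g),
\]
obtained by summing the factored expected counts over the appropriate indices.

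The next step is to establish the positivity facts that make the equivalences clean. Since all observed counts $y_{ij11}$ are positive, the term $\sum_{ij}y_{ij11}\log\mu_{ij11}$ in (\ref{eq2.2}) forces $\hat m_{ij11}=\hat\mu_{ij11}>0$ for every $(i,j)$ at the MLE, as otherwise the log-likelihood would be $-\infty$. Likewise, positivity of the supplementary margins $y_{i+12}$, $y_{+j21}$ and $y_{++22}$ forces $\hat\mu_{i+12},\hat\mu_{+j21},\hat\mu_{++22}>0$, whence $\hat b_{ij}>0$ from $\mu_{i+12}=\sum_{j}m_{ij11}b_{ij}$ and $\hat g>0$ (and finite) from $\mu_{++22}=\sum_{ij}m_{ij11}a_{ij}b_{ij}g$. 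These guarantee that the only nonignorable factor able to vanish in Models [M1], [M3] is $\alpha_{i.}$, and symmetrically $\beta_{.j}$ in Models [M2], [M4].

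With these in hand, the chain of equivalences for Models [M1] and [M3] proceeds as follows. A boundary solution means $\hat\pi_{ij2l}=0$ for some $(i,j,l)$; if $l=1$ this gives $\hat m_{ij11}\hat\alpha_{i.}=0$, and if $l=2$ it gives $\hat m_{ij11}\hat\alpha_{i.}\hat b_{ij}\hat g=0$, so using $\hat m_{ij11},\hat b_{ij},\hat g>0$ both cases force $\hat\alpha_{i.}=0$. Conversely $\hat\alpha_{i.}=0$ makes $\hat\mu_{ij2l}=0$ for all $j,l$, hence $\hat\pi_{i+2+}=0$; the first displayed identity then gives $\hat\pi_{i+2+}=0\Leftrightarrow\hat\alpha_{i.}=0$, and the explicit formula $\alpha_{i.}=\exp[-2\{\lambda_{R_{1}}(1)+\lambda_{Y_{1}R_{1}}(i,1)+\lambda_{R_{1}R_{2}}(1,1)\}]$ together with the zero-sum constraint $\lambda_{Y_{1}R_{1}}(i,1)=-\lambda_{Y_{1}R_{1}}(i,2)$ yields $\hat\alpha_{i.}=0\Leftrightarrow\hat\lambda_{Y_{1}R_{1}}(i,2)=-\infty$, the remaining parameters being finite at the MLE. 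The count ``at least one'' is immediate from Definition \ref{def1}; ``at most $(I-1)$'' follows since $\hat\alpha_{i.}=0$ for every $i$ would give $\hat\mu_{+j21}=\sum_{i}\hat m_{ij11}\hat\alpha_{i.}=0$, contradicting $y_{+j21}>0$. Parts~2 and~3 follow by the same argument with the roles of $(i,\alpha_{i.},\lambda_{Y_{1}R_{1}},y_{+j21})$ and $(j,\beta_{.j},\lambda_{Y_{2}R_{2}},y_{i+12})$ interchanged, Model [M5] combining the two since there $a_{ij}=\alpha_{i.}$ and $b_{ij}=\beta_{.j}$ hold simultaneously.

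The main obstacle I anticipate is the case $l=2$ (and its analogue $k=2$): a priori the vanishing of $\hat\pi_{ij22}$ could be caused by $\hat b_{ij}=0$ or $\hat g=0$ rather than by the nonignorable parameter $\hat\alpha_{i.}$. The crux is therefore the positivity step of the second paragraph, which uses $y_{i+12},y_{++22}>0$ to keep $\hat b_{ij}$ and $\hat g$ strictly positive at the MLE, thereby isolating $\hat\alpha_{i.}$ (respectively $\hat\beta_{.j}$) as the sole possible source of a boundary solution.
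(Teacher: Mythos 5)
Your proof is correct and establishes the same three-way equivalence, but it runs through Baker \emph{et al.}'s multiplicative parameterization $(m_{ij11},a_{ij},b_{ij},g)$ rather than the log-linear one used in the paper. The paper's proof first argues that positivity of the response cells and supplementary margins forces all log-linear parameters other than $\lambda_{Y_{1}R_{1}}(i,k)$ (resp.\ $\lambda_{Y_{2}R_{2}}(j,l)$) to be finite, reads off $\hat{\pi}_{i+2+}=0\Leftrightarrow\hat{\lambda}_{Y_{1}R_{1}}(i,2)=-\infty$ from the sum-of-exponentials form of $\hat{\pi}_{i+2+}$, and only then passes to $\hat{\alpha}_{i.}$ via the closed form of $a_{ij}$. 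You instead factor $\mu_{ij21}=m_{ij11}a_{ij}$ and $\mu_{ij22}=m_{ij11}a_{ij}b_{ij}g$, prove $\hat{m}_{ij11},\hat{b}_{ij},\hat{g}>0$ from the positive observed counts, and conclude that $\hat{\alpha}_{i.}$ is the only factor that can vanish; this makes explicit the one case the paper leaves implicit, namely that $\hat{\pi}_{ij22}=0$ cannot be caused by $\hat{b}_{ij}$ or $\hat{g}$. Two small points: your blanket claim $\hat{b}_{ij}>0$ deduced from $\mu_{i+12}=\sum_{j}m_{ij11}b_{ij}$ is valid cell-by-cell only for Models [M1]--[M4], where the nuisance odds do not vary with the summed index; under Model [M5] it yields only that some $\hat{\beta}_{.j}>0$, but your argument survives because the conclusion there is disjunctive and $\hat{\mu}_{ij22}=\hat{m}_{ij11}\hat{\alpha}_{i.}\hat{\beta}_{.j}\hat{g}$ with $\hat{g}>0$ forces $\hat{\alpha}_{i.}=0$ or $\hat{\beta}_{.j}=0$ directly. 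And, like the paper, you assert rather than fully justify the finiteness of $\hat{\lambda}_{R_{1}}(1)$ and $\hat{\lambda}_{R_{1}R_{2}}(1,1)$ in the final equivalence, so the two proofs are at the same level of rigor on that step.
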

\begin{proof}
See Appendix A1. 
\end{proof}
From the proof of Proposition \ref{prop1} in Appendix A1, note that the one-to-one relation between the cell probabilities and the log-linear parameters cannot be used to derive the connection between the different forms of boundary solutions. This is because it is not obvious which specific log-linear parameters have infinite MLE's just by noting the zero MLE's of the nonresponse cell probabilities when boundary solutions occur.  

\section{Some examples of boundary solutions in NMAR models}
In this section, we reanalyze some examples in Baker {\it et al.} (1992), illustrating the result in Section 3. We use Proposition \ref{prop1} to investigate a claim made by Baker {\it et al.} (1992) regarding forms and occurrence of boundary solutions in an $I\times J\times 2\times 2$ incomplete table. This improvement is useful as it avoids computation and provides the exact boundary solutions under a NMAR model by simply noting the level (s) of the variable (s) for which the MLE's of the parameters are negative or infinite.

First, we present the correct expression of the likelihood ratio statistic for missing data models in such a table. Consider testing the goodness of fit of a null model (here one of the Models [M1]-[M5]) against the alternative model (perfect fit model). Let $\{\hat{\mu}_{ijkl}\}$ and $\{\tilde{\mu}_{ijkl}\}$ denote the MLE's of the expected counts under a null model and a perfect fit model respectively. Also, let $L_{0}$ and $L_{1}$ denote the log-likelihoods for the null and the alternative models, respectively. Then the likelihood ratio statistic is given by
\begin{eqnarray}\label{likr}
G^{2} &=& -2(L_{0} - L_{1}) \nonumber \\
&=& -2\left[\sum_{i,j}y_{ij11}\ln\left(\frac{\hat{\mu}_{ij11}}{\tilde{\mu}_{ij11}}\right) + \sum_{i}y_{i+12}\ln\left(\frac{\hat{\mu}_{i+12}}{\tilde{\mu}_{i+12}}\right) + \sum_{j}y_{+j21}\ln\left(\frac{\hat{\mu}_{+j21}}{\tilde{\mu}_{+j21}}\right) \right.\nonumber \\
&& \left. + y_{++22}\ln\left(\frac{\hat{\mu}_{++22}}{\tilde{\mu}_{++22}}\right) - \hat{\mu}_{++++} + \tilde{\mu}_{++++}\right] \nonumber \\
&=& -2\left[\sum_{i,j}y_{ij11}\ln\left(\frac{\hat{m}_{ij11}}{y_{ij11}}\right) + \sum_{i}y_{i+12}\ln\left(\frac{\sum_{j}\hat{m}_{ij11}\hat{b}_{ij}}{y_{i+12}}\right) + \sum_{j}y_{+j21}\ln\left(\frac{\sum_{i}\hat{m}_{ij11}\hat{a}_{ij}}{y_{+j21}}\right) \right.\nonumber \\
&& \left. + y_{++22}\ln\left(\frac{\sum_{i,j}\hat{m}_{ij11}\hat{a}_{ij}\hat{b}_{ij}\hat{g}}{y_{++22}}\right)- \sum_{i,j}\hat{m}_{ij11}(1 + \hat{a}_{ij} + \hat{b}_{ij} + \hat{a}_{ij}\hat{b}_{ij}\hat{g}) + N\right].
\end{eqnarray}
Note that the last two terms of (\ref{likr}) are missing in the expression of $G^{2}$ in Baker {\it et al.} (1992) (see p.~646). Observe that in general, $\sum_{i,j}\hat{m}_{ij11}(1 + \hat{a}_{ij} + \hat{b}_{ij} + \hat{a}_{ij}\hat{b}_{ij}\hat{g}) \neq N$, unless the hypothetical (null) model is a perfect fit model for example, in which case $G^{2}=0$.

Using Definition \ref{def0} and the notations in Section 2, Models [M1]-[M5] can be represented as follows $-$ Model [M1]: $(\alpha_{i.},\beta_{..})$, Model [M2]: $(\alpha_{..},\beta_{.j})$, Model [M3]: $(\alpha_{i.},\beta_{i.})$, Model [M4]: $(\alpha_{.j},\beta_{.j})$ and Model [M5]: $(\alpha_{i.},\beta_{.j})$. Accordingly, the expression of $G^{2}$ in (\ref{likr}) for each of the above models may be obtained by making suitable substitutions and using the MLE's in Baker {\it et al.} (1992) (see pp.~647-648). For example, the MLE's under the model $(\alpha_{i.},\beta_{..})$ are
\begin{equation*}
\hat{m}_{ij11} = \frac{y_{ij11}y_{i+1+}y_{++11}}{y_{i+11}y_{++1+}},~\sum_{i}\hat{m}_{ij11}\hat{\alpha}_{i.} = y_{+j21},~\hat{\beta}_{..} = \frac{y_{++12}}{y_{++11}},~\hat{g}=\frac{y_{++11}y_{++22}}{y_{++12}y_{++21}}.
\end{equation*}
Hence, from (\ref{likr}), the likelihood ratio statistic is
\begin{equation*}
G^{2} = -2\left[\sum_{i,j}y_{ij11}\ln\left(\frac{y_{i+1+}y_{++11}}{y_{i+11}y_{++1+}}\right) + \sum_{i}y_{i+12}\ln\left(\frac{y_{i+1+}y_{++12}}{y_{i+12}y_{++1+}}\right)\right].
\end{equation*}

Baker {\it et al.} (1992) mentioned that if any solution $\hat{\alpha}_{i.}$ or $\hat{\beta}_{.j}$ to the systems of equations $\sum_{i}N\hat{\pi}_{ij11}\hat{\alpha}_{i.} = y_{+j21}$ and $\sum_{j}N\hat{\pi}_{ij11}\hat{\beta}_{.j} = y_{i+12}$ respectively is negative, then boundary solutions occur, that is, the MLE lies on the boundary of the parameter space. Closed-form boundary MLE's under Models [M1]-[M5] may then be obtained (see p.~649 of Baker {\it et al.} (1992)) by setting certain parameter estimates ($\hat{\alpha}_{i.}$ or $\hat{\beta}_{.j}$) to 0 in the likelihood equations obtained from (\ref{eq2.2}) for the models. They claimed that counterintuitively, the parameter estimate set to 0 need not be the estimate with a negative value as the solution to the above systems of equations. In particular, for a $2\times 2\times 2\times 2$ incomplete table, they suggested examining both boundaries $\hat{\alpha}_{1.}=0$ and $\hat{\alpha}_{2.}=0$; similarly $\hat{\beta}_{.1}=0$ and $\hat{\beta}_{.2}=0$ to determine the minimum value of $G^{2}$, which corresponds to the MLE. We improve this claim and thereby obviate computations by showing that the MLE indeed always occurs on the specific boundary (level (s) of the variable (s)) for which $\hat{\alpha}_{i.}$ or $\hat{\beta}_{.j}$ is negative. In the next three examples, we use Proposition \ref{prop1} to illustrate this point for Models [M1]-[M5].  
\begin{example}\label{ex1}
Consider the data in Table 2 discussed in Baker {\it et al.}~(1992), which cross-classifies mother's self-reported smoking status ($Y_{1}$) ($Y_{1} = 1(2)$ for smoker (non-smoker)) with newborn's weight ($Y_{2}$) ($Y_{2} = 1(2)$ if weight $< 2500$ grams ($\geq 2500$ grams)). The supplementary margins contain data on only smoking status, data on only newborn's weight and missing data on both variables. 
\vone\noindent   
{\bf Table 2}. Birth weight and smoking: observed counts.
\begin{center}
$
\begin{array}{|c|c|cc|c|}\hline
& & R_{2} = 1 & & R_{2} = 2 \\ \hline
& & Y_{2}=1 & Y_{2}=2 & Y_{2}~\textrm{missing} \\ \hline  
R_{1} = 1 & Y_{1}=1 & 4512 & 21009 & 1049 \\
& Y_{1}=2 & 3394 & 24132 & 1135 \\ \hline
R_{1} = 2 & Y_{1}~\textrm{missing} & 142 & 464 & 1224 \\ \hline
\end{array}
$
\end{center}
Baker {\it et al.} (1992) mentioned that $\hat{\alpha}_{2.} < 0$ is obtained on fitting models [M1], [M3] and [M5] to the data in Table 2. Also, the value of $G^{2}$  corresponding to $\hat{\alpha}_{2.}=0$ is larger than that corresponding to $\hat{\alpha}_{1.}=0$ for all the above models, which is incorrect as shown below. When we fit the same models to the data in Table 2 using the `MASS' package in R software, we obtain $\hat{\alpha}_{1.} = 0.0493$ and $\hat{\alpha}_{2.} = -0.0237$ under Models [M1], [M3] and [M5], that is, boundary solutions occur in each of the models. 

Also, $G^{2} = 55.2198~(12.4682)$ under Model [M1], $G^{2} = 55.2168~(12.4638)$ under Model [M3] and $G^{2} = 55.214~(12.464)$ under Model [M5] when $\hat{\alpha}_{1.} = 0~(\hat{\alpha}_{2.} = 0)$. The $G^{2}$ values for $\hat{\alpha}_{2.} = 0$ upon rounding off in each of the models match those given in Table V of Baker {\it et al.} (1992). Hence, $G^{2}$ is minimum for $\hat{\alpha}_{2.} = 0$ in each case, which implies that boundary solutions are given by $\hat{\alpha}_{2.} = 0$ or equivalently $\hat{\pi}_{2+2+}=0$. This result is consistent with points 1 and 3 of Proposition \ref{prop1}. Further, it is the exact form of boundary solutions that we obtain on fitting Models [M1], [M3] and [M5] to the data in Table 2 using the EM algorithm (see the `ecm.cat' function of `cat' package in R software). 
\end{example}
\begin{example}\label{ex2}
Consider the example given in the last paragraph of p.~646 in Baker {\it et al.} (1992). The model [M1] was fitted to the following data: $y_{1111} = 100$, $y_{1211} = 40$, $y_{2111} = 50$, $y_{2211} = 1000$, $y_{1+12} = 0$, $y_{2+12} = 0$, $y_{+121} = 100$, $y_{+221} = 10$ and $y_{++22} = 0$. They mentioned that though $\hat{\alpha}_{1.} < 0$, $G^{2}$ is minimum for $\hat{\alpha}_{2.} = 0$ implying that the MLE is on the boundary $\hat{\alpha}_{2.} = 0$. However, we obtain $\hat{\alpha}_{1.} = 1.0153~(> 0)$ and $\hat{\alpha}_{2.} = -0.0306$ on fitting Model [M1] to the above data. Also, note that $\hat{g} = \frac{y_{++11}y_{++22}}{y_{++12}y_{++21}}$ (see p.~649 of Baker {\it et al.} (1992)) is undefined since $y_{++12} = 0$. Hence, we introduce the following changes: $y_{1+12} = 1$, $y_{2+12} = 1$ and $y_{++22} = 2$ as shown in Table 3.
\vone\noindent 
{\bf Table 3}.
\begin{center}
$
\begin{array}{|c|c|cc|c|}\hline
& & R_{2} = 1 & & R_{2} = 2 \\ \hline
& & Y_{2}=1 & Y_{2}=2 & Y_{2}~\textrm{missing} \\ \hline  
R_{1} = 1 & Y_{1}=1 & 100 & 40 & 1 \\
& Y_{1}=2 & 50 & 1000 & 1 \\ \hline
R_{1} = 2 & Y_{1}~\textrm{missing} & 100 & 10 & 2 \\ \hline
\end{array}
$
\end{center}
On fitting models [M1], [M3] and [M5] to the data in Table 3, we obtain $\hat{\alpha}_{1.} = 1.0098$ under [M1], and $\hat{\alpha}_{1.} = 1.0153$ under [M3] and [M5], along with $\hat{\alpha}_{2.} = -0.0306$ under all the above models, which implies boundary solutions occur in each case. Also, $G^{2} = 426.1604~(17.4704)$ under Model [M1], $G^{2} = 424.3288~(15.669)$ under Model [M3] and $G^{2} = 424.3188~(15.664)$ under Model [M5] when $\hat{\alpha}_{1.} = 0~(\hat{\alpha}_{2.} = 0)$. Hence, $G^{2}$ is minimum for $\hat{\alpha}_{2.} = 0$ in each model, which implies that boundary solutions are given by $\hat{\pi}_{2+2+}=0$. This result is consistent with points 1 and 3 of Proposition \ref{prop1}. Further, it is the exact form of boundary solutions that we obtain on fitting Models [M1], [M3] and [M5] to the data in Table 3 using the EM algorithm.  
\end{example}
\begin{example}\label{ex3}
Consider the data in Table 2 discussed in Example 1. We introduce the following changes corresponding to supplementary margins in Table 2: $464\rightarrow 700$ and $1135\rightarrow 750$. The modified table is shown in Table 4. 
\vone\noindent   
{\bf Table 4}. Birth weight and smoking: observed counts (modified).
\begin{center}
$
\begin{array}{|c|c|cc|c|}\hline
& & R_{2} = 1 & & R_{2} = 2 \\ \hline
& & Y_{2}=1 & Y_{2}=2 & Y_{2}~\textrm{missing} \\ \hline  
R_{1} = 1 & Y_{1}=1 & 4512 & 21009 & 1049 \\
& Y_{1}=2 & 3394 & 24132 & 750 \\ \hline
R_{1} = 2 & Y_{1}~\textrm{missing} & 142 & 700 & 1224 \\ \hline
\end{array}
$
\end{center}
When we fit the models [M2], [M4] and [M5] to the data in Table 4, we obtain $\hat{\beta}_{.1} = 0.2538$ under [M2], and $\hat{\beta}_{.1} = 0.2543$ under [M4] and [M5] along with $\hat{\beta}_{.2} = -0.0047$ under all the above models, that is, boundary solutions occur in each of the models. Also, $G^{2} = 98.5962~(3.3548)$ under Model [M2], $G^{2} = 96.1622~(0.922)$ under Model [M4] and $G^{2} = 96.162~(0.9276)$ under Model [M5] when $\hat{\beta}_{.1} = 0~(\hat{\beta}_{.2} = 0)$. The $G^{2}$ values in brackets above match those obtained using the EM algorithm. Hence, $G^{2}$ is minimum for $\hat{\beta}_{.2} = 0$ in each case, which implies that boundary solutions are given by $\hat{\beta}_{.2} = 0$ or equivalently $\hat{\pi}_{+2+2}=0$. This result is consistent with points 2 and 3 of Proposition \ref{prop1}. Further, it is the exact form of boundary solutions that we obtain on fitting Models [M2], [M4] and [M5] to the data in Table 4 using the EM algorithm. 
\end{example}

\section{Conditions for the occurrence of boundary solutions}
In this section, we discuss sufficient conditions and also propose necessary conditions for the occurrence of boundary solutions in two-way incomplete tables with both variables missing. We show that the sufficient conditions are not necessary, which disproves a conjecture made by Kim and Park (2014). Further, we prove that the proposed necessary conditions are not sufficient. Both sets of conditions are simple to verify since they involve only the observed cell counts in the tables. The sufficient conditions and the necessary conditions are of practical utility in identifying the occurrence and non-occurrence, respectively of boundary solutions in such tables.       
\subsection{Sufficient conditions for the occurrence of boundary solutions}
Following Park {\it et al.}~(2014), define the four odds based on the observed (joint/marginal) cell counts for any pair $(j,j')$ of $Y_{2}$: 
\begin{equation}\label{eq3.1}
\nu_{i}(j,j') = \frac{\hat{\pi}_{ij11}}{\hat{\pi}_{ij'11}},~\nu_{n}(j,j') = \min_{i}\{\nu_{i}(j,j')\},~\nu_{m}(j,j') = \max_{i}\{\nu_{i}(j,j')\},~\nu(j,j') = \frac{y_{+j21}}{y_{+j'21}}.
\end{equation}
Similarly, for a given pair $(i,i')$ of $Y_{1}$, define the four odds using the observed cell counts: 
\begin{equation}\label{eq3.2}
\omega_{j}(i,i') = \frac{\hat{\pi}_{ij11}}{\hat{\pi}_{i'j11}},~\omega_{n}(i,i') = \min_{j}\{\omega_{j}(i,i')\},~\omega_{m}(i,i') = \max_{j}\{\omega_{j}(i,i')\},~\omega(i,i') = \frac{y_{i+12}}{y_{i'+12}}.
\end{equation}
Note that $\nu_{i}(j,j')$ and $\omega_{j}(i,i')$ are called the response odds, while $\nu(j,j')$ and $\omega(i,i')$ are called the nonresponse odds. Using the MLE's of $\{\pi_{ij11}\}$ under Models [M1]-[M5] (see pp.~647-648 of Baker {\it et al.}~(1992)), we deduce that $\nu_{i}(j,j') = \frac{y_{ij11}}{y_{ij'11}}$ and $\omega_{j}(i,i') = \frac{y_{ij11}}{y_{i'j11}}$, which involve only the fully observed counts.   

Theorem 1 of Park {\it et al.}~(2014) deals with sufficient conditions for the occurrence of boundary solutions only under Model [M5]. However, in the next result, we provide such conditions for the occurrence of boundary solutions under Models [M1]-[M5]. Also, we provide a proof which is similar to that of Theorem 1 of Park {\it et al.}~(2014), but we give direct arguments, which are different from the contrapositive ones used by Park {\it et al.}~(2014).
\begin{theorem}\label{th1}
Consider the following conditions for an $I\times I\times 2\times 2$ contingency table. 
\begin{enumerate}
\item[1.] $\nu(j,j')\not\in(\nu_{n}(j,j'),\nu_{m}(j,j'))$ for at least one pair $(j,j')$ of $Y_{2}$,
\item[2.] $\omega(i,i')\not\in(\omega_{n}(i,i'),\omega_{m}(i,i'))$ for at least one pair $(i,i')$ of $Y_{1}$.
\end{enumerate}
Then we have the following: 
\begin{enumerate}
\item[(a)] Boundary solutions in NMAR models for only $Y_{1}$ (Models [M1] and [M3]) occur if Condition 1 holds. 
\item[(b)] Boundary solutions in NMAR models for only $Y_{2}$ (Models [M2] and [M4]) occur if Condition 2 holds.
\item[(c)] Boundary solutions in the NMAR model for both $Y_{1}$ and $Y_{2}$ (Model [M5]) occur if Condition 1 or Condition 2 holds. 
\end{enumerate}
\end{theorem}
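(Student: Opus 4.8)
The plan is to read the statement through Proposition \ref{prop1} and the reduced likelihood equations of Baker \textit{et al.} (1992), reducing everything to a single convex-combination identity. By Proposition \ref{prop1}, a boundary solution for Models [M1] and [M3] is the same as $\hat\alpha_{i.}=0$ for at least one $i$, for Models [M2] and [M4] the same as $\hat\beta_{.j}=0$ for at least one $j$, and for Model [M5] either of these. So it suffices to show that Condition 1 forces some $\hat\alpha_{i.}=0$ and that Condition 2 forces some $\hat\beta_{.j}=0$. I would prove part (a) in detail, obtain part (b) by the symmetric substitution $i\leftrightarrow j$, $\alpha\leftrightarrow\beta$, $\nu\leftrightarrow\omega$, and then read off part (c) directly from Proposition \ref{prop1}(3).

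For part (a), recall that under Models [M1] and [M3] one has $a_{ij}=\alpha_{i.}$, whence $\hat\mu_{ij21}=\hat m_{ij11}\hat\alpha_{i.}$, and matching the supplementary margin $\hat\mu_{+j21}=y_{+j21}$ gives the likelihood equations $\sum_i\hat m_{ij11}\hat\alpha_{i.}=y_{+j21}$ for each $j$, with every $\hat m_{ij11}>0$ since the observed counts are positive. The key identity is $\nu_i(j,j')=\hat m_{ij11}/\hat m_{ij'11}$, so that eliminating between the $j$- and $j'$-equations yields
\[
\sum_i\bigl(\nu_i(j,j')-\nu(j,j')\bigr)\hat m_{ij'11}\,\hat\alpha_{i.}=0 ,
\]
equivalently $\nu(j,j')=\sum_i\nu_i(j,j')\,w_i/\sum_i w_i$ with $w_i=\hat m_{ij'11}\hat\alpha_{i.}\ge0$. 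If all $\hat\alpha_{i.}>0$ the weights are strictly positive, so $\nu(j,j')$ is a genuine convex combination of the response odds and lies in $(\nu_n(j,j'),\nu_m(j,j'))$ for every pair; this contradicts Condition 1. Hence some $\hat\alpha_{i.}$ must vanish, which is a boundary solution by Proposition \ref{prop1}(1). Equivalently---and this is the direct argument in the spirit of Theorem 1 of Park \textit{et al.}~(2014) rather than its contrapositive---when Condition 1 gives $\nu(j,j')\ge\nu_m(j,j')$ all coefficients $\nu_i(j,j')-\nu(j,j')$ are $\le0$ (and $\ge0$ when $\nu(j,j')\le\nu_n(j,j')$), so a vanishing sum of same-signed terms with positive $\hat m_{ij'11}$ forces $\hat\alpha_{i.}=0$ at every index whose coefficient is nonzero. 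Part (b) is identical with $\sum_j\hat m_{ij11}\hat\beta_{.j}=y_{i+12}$ and $\omega_j(i,i')=\hat m_{ij11}/\hat m_{i'j11}$, and part (c) follows since Model [M5] carries both NMAR mechanisms, so Condition 1 yields $\hat\pi_{i+2+}=0$ and Condition 2 yields $\hat\pi_{+j+2}=0$.

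The hard part will be the interface between the interior and the boundary, and making the ``direct'' form airtight. The convex-combination step legitimately uses that \emph{all} margin equations hold simultaneously, which is guaranteed only at an interior stationary point; at a boundary point the score equation for an index with $\hat\alpha_{i.}=0$ degenerates, so I must justify via Baker \textit{et al.}'s characterization of the MLE exactly which of the equations $\sum_i\hat m_{ij11}\hat\alpha_{i.}=y_{+j21}$ survive, and argue that the impossibility of an interior fit genuinely forces the constrained optimum onto the boundary. The ``at most $(I-1)$'' count is then inherited from Proposition \ref{prop1}: were every $\hat\alpha_{i.}=0$, then $y_{+j21}=0$ for all $j$, contradicting positivity of the supplementary margins. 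A final wrinkle to dispose of is the degenerate pair with $\nu_n(j,j')=\nu_m(j,j')$, where the open interval is empty and the two margin equations are proportional; there $\nu(j,j')$ must equal the common response odds, so either no positive solution exists (again a boundary solution) or the pair carries no information and should be read as excluded from Condition 1.
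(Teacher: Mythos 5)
Your argument is correct and is essentially the paper's own proof: both start from Baker \emph{et al.}'s likelihood equations $\sum_i\hat\mu_{ij11}\hat\alpha_{i.}=y_{+j21}$ and $\sum_j\hat\mu_{ij11}\hat\beta_{.j}=y_{i+12}$, and your convex-combination identity $\sum_i(\nu_i(j,j')-\nu(j,j'))\hat\mu_{ij'11}\hat\alpha_{i.}=0$ is just a repackaging of the paper's sign analysis of $\nu_m(j,j')-\nu(j,j')$ and $\nu(j,j')-\nu_n(j,j')$ in (\ref{eq3.14})--(\ref{eq3.17}), with the same symmetry for part (b), the same combination for part (c), and the same appeal to positive supplementary margins for the ``at most $I-1$'' count. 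The subtleties you flag (which margin equations survive at a boundary MLE, and the degenerate case $\nu_n=\nu_m$) are also elided by the paper, which adopts the convention that a non-positive solution of the unconstrained linear system is what signals a boundary solution.
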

\begin{proof}
See Appendix A2.
\end{proof}  

\subsection{The sufficient conditions are not necessary}
The next example shows that the sufficient conditions for the occurrence of boundary solutions mentioned in Theorem \ref{th1} are not necessary. This result has not been discussed in the literature earlier. In fact, Kim and Park (2014) proved that the above conditions are both necessary and sufficient for a $2\times 2\times 2\times 2$ incomplete table. They conjectured that a similar result would hold for general two-way incomplete tables as well. 

\begin{example}\label{ex5}
Consider Table 5 discussed in Park {\it et al.}~(2014), which cross-classifies data on bone mineral density ($Y_{1}$) and family income ($Y_{2}$) in a $3\times 3\times 2\times 2$ incomplete table. Both variables $Y_{1}$ and $Y_{2}$ have three levels. The total count is $2998$ out of which data on $Y_{1}$ and $Y_{2}$ are available for $1844$ persons, data on $Y_{1}$ only for $231$ persons, data on $Y_{2}$ only for $878$ persons, and data on neither of them for $45$ persons.
\vone\noindent
{\bf Table 5}. Bone mineral density ($Y_{1}$) and family income ($Y_{2}$).
\begin{center}
	$
	\begin{array}{|c|c|ccc|c|}\hline
	& & R_{2} = 1 & & & R_{2} = 2 \\
	& & Y_{2} = 1 & Y_{2} = 2 & Y_{2} = 3 & \text{Missing} \\ \hline
	& Y_{1} = 1 & 621 & 290 & 284 & 135 \\   
	R_{1} = 1 & Y_{1} = 2 & 260 & 131 & 117 & 69 \\
	& Y_{1} = 3 & 93 & 30 & 18 & 27 \\ \hline
	R_{1} = 2 & \text{Missing} & 456 & 156 & 266 & 45 \\ \hline
	\end{array}
	$
\end{center}	
Now, we introduce the following changes corresponding to supplementary margins in Table 5: $266\rightarrow 125,~69\rightarrow 60$ and $27\rightarrow 20$. The modified table is shown in Table 6. 
\vone\noindent
{\bf Table 6}. Modified Table 5.
\begin{center}
$
\begin{array}{|c|c|ccc|c|}\hline
& & R_{2} = 1 & & & R_{2} = 2 \\
& & Y_{2} = 1 & Y_{2} = 2 & Y_{2} = 3 & \text{Missing} \\ \hline
& Y_{1} = 1 & 621 & 290 & 284 & 135 \\   
R_{1} = 1 & Y_{1} = 2 & 260 & 131 & 117 & 60 \\
& Y_{1} = 3 & 93 & 30 & 18 & 20 \\ \hline
R_{1} = 2 & \text{Missing} & 456 & 156 & 125 & 45 \\ \hline
\end{array}
$
\end{center}
From Table 6, $\nu(1,2) = 456/156 = 2.92$, $\nu(1,3) = 456/125 = 3.65$, $\nu(2,3) = 156/125 = 1.25$, $\omega(1,2)= 135/60 = 2.25$, $\omega(1,3) = 135/20 = 6.75$ and $\omega(2,3) = 60/20 = 3.00$. Let $I_{\nu}(j,j') = (\nu_{n}(j,j'),\nu_{m}(j,j'))$ and $I_{\omega}(i,i') = (\omega_{n}(i,i'),\omega_{m}(i,i'))$. Then from Table 6, it can be shown that $\nu(1,2)\in I_{\nu}(1,2)=(260/131,93/30)$, $\nu(1,3)\in I_{\nu}(1,3)=(621/284,93/18)$, $\nu(2,3)\in I_{\nu}(2,3)=(290/284,30/18)$, $\omega(1,2)\in I_{\omega}(1,2)=(290/131,284/117)$, $\omega(1,3)\in I_{\omega}(1,3)=(621/93,284/18)$ and $\omega(2,3)\in I_{\omega}(2,3)=(260/93,117/18)$ so that the sufficient conditions for the occurrence of boundary solutions in Theorem \ref{th1} are not satisfied. The MLE's of the parameters obtained on fitting Models [M1]-[M5] in various subtables of Table 6 are shown in Table 7.
\vone\noindent
{\bf Table 7}. MLE's of parameters in subtables of Table 6.
\begin{center}
$
\begin{array}{|c|c|c|c|c|}\hline
\text{Subtable} & \text{NMAR} & \text{MLE's} & \text{Boundary} \\ 
& \text{model} & & \text{solutions} \\ \hline
Y_{1} & [M1] & \hat{\alpha}_{1.} = 0.6556,\hat{\alpha}_{2.} = -1.0537,\hat{\alpha}_{3.} = 3.4109 & \hat{\pi}_{2+2+} = 0 \\ \hline
Y_{2} & [M2] & \hat{\beta}_{.1} = 0.1355,\hat{\beta}_{.2} = 0.3420,\hat{\beta}_{.3} = -0.1846 & \hat{\pi}_{+3+2} = 0 \\ \hline
Y_{1}Y_{2} & [M1] & \hat{\alpha}_{1.} = 0.6556,\hat{\alpha}_{2.} = -1.0537,\hat{\alpha}_{3.} = 3.4109 & \hat{\pi}_{2+2+} = 0 \\
& [M3] & \hat{\alpha}_{1.} = 0.6534,\hat{\alpha}_{2.} = -1.0551,\hat{\alpha}_{3.} = 3.4874 & \hat{\pi}_{2+2+} = 0 \\ \hline
Y_{1}Y_{2} & [M2] & \hat{\beta}_{.1} = 0.1355,\hat{\beta}_{.2} = 0.3420,\hat{\beta}_{.3} = -0.1846 & \hat{\pi}_{+3+2} = 0 \\ 
& [M4] & \hat{\beta}_{.1} = 0.1421,\hat{\beta}_{.2} = 0.3289,\hat{\beta}_{.3} = -0.1712 & \hat{\pi}_{+3+2} = 0 \\ \hline  
Y_{1}Y_{2} & [M5] & \hat{\alpha}_{1.} =  0.6534,\hat{\alpha}_{2.} = -1.0551,\hat{\alpha}_{3.} = 3.4874, & \hat{\pi}_{2+2+} = 0, \\ 
& & \hat{\beta}_{.1} = 0.1421,\hat{\beta}_{.2} = 0.3289,\hat{\beta}_{.3} = -0.1712 & \hat{\pi}_{+3+2} = 0 \\ \hline    
\end{array}
$
\end{center}
From the above table, note that in each subtable, at least one of $\hat{\alpha}_{i.}$ and $\hat{\beta}_{.j}$ is negative, which imply that boundary solutions occur. The forms of boundary solutions under the Models [M1]-[M5] are also the same as described in Section 3. This shows that for an $I\times J\times 2\times 2$ incomplete table, where $I,J\geq 3$, the sufficient conditions for the occurrence of boundary solutions under Models [M1]-[M5] in Theorem \ref{th1} are not necessary.
\end{example}

\subsection{Necessary conditions for the occurrence of boundary solutions}
We next state below a result due to Kaykobad (1985), which will be used later to obtain a result on the occurrence of boundary solutions. 
\begin{lemma}\label{th3}
Suppose $A = (a_{ij})$ is a matrix with $a_{ij}\geq 0$ for $i\neq j = 1,2,\ldots,n$ and $a_{ii} > 0$. Also, let ${\bf b} = (b_{j})$, where $b_{j} > 0$ for $1\leq j\leq n$. If 
\begin{equation}\label{eq}
b_{i} > \sum_{j\neq i=1}^{n}a_{ij}\frac{b_{j}}{a_{jj}},\quad\forall~1\leq i\leq n, 
\end{equation}
then $A$ is invertible and $A^{-1}{\bf b} > {\bf 0}$.
\end{lemma}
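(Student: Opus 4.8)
The plan is to reduce the statement to a normalized system whose coefficient matrix is a small perturbation of the identity, and then to run a continuity (homotopy) argument that propagates positivity of the solution from a trivial base case. First I would rescale. Since $a_{jj}>0$ and $b_j>0$, substitute $x_j=(b_j/a_{jj})\,t_j$ into $A\mathbf{x}=\mathbf{b}$ and divide the $i$-th equation by $b_i$. The $i$-th equation becomes $t_i+\sum_{j\ne i}p_{ij}t_j=1$, where $p_{ij}=a_{ij}b_j/(a_{jj}b_i)\ge 0$ for $j\ne i$ and $p_{ii}=0$; in matrix form this is $(I+P)\mathbf{t}=\mathbf{1}$ with $\mathbf{1}$ the all-ones vector. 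Crucially, hypothesis \eqref{eq} translates, after dividing by $b_i$, into $\rho_i:=\sum_{j\ne i}p_{ij}<1$ for every $i$, i.e. $\|P\|_\infty<1$. The transformation is a two-sided positive diagonal scaling, $A=\mathrm{diag}(b_i)\,(I+P)\,\mathrm{diag}(a_{jj}/b_j)$, so $A$ is invertible iff $I+P$ is, and $\mathbf{x}>\mathbf{0}$ iff $\mathbf{t}>\mathbf{0}$.

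Invertibility is then immediate: since $\|sP\|_\infty\le\|P\|_\infty<1$ for all $s\in[0,1]$, the matrix $I+sP=I-(-sP)$ is invertible by the Neumann series, in particular at $s=1$, which gives invertibility of $A$. It therefore remains only to show that $\mathbf{t}=(I+P)^{-1}\mathbf{1}$ is strictly positive. For this I would consider the homotopy $\mathbf{t}(s)=(I+sP)^{-1}\mathbf{1}$, $s\in[0,1]$, which is continuous (the inverse exists throughout) and satisfies $\mathbf{t}(0)=\mathbf{1}>\mathbf{0}$. The claim is that $\mathbf{t}(s)$ never acquires a zero or negative component, so that strict positivity persists up to $s=1$.

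To prove this, let $s^\ast$ be the infimum of the parameters at which positivity fails. By continuity $\mathbf{t}(s^\ast)\ge\mathbf{0}$ with some component equal to $0$, say $t_i(s^\ast)=0$. The $i$-th equation then reads $s^\ast\sum_{j\ne i}p_{ij}t_j(s^\ast)=1$; bounding $t_j(s^\ast)\le T:=\max_l t_l(s^\ast)$ and using $s^\ast\rho_i<1$ gives $1=s^\ast\sum_{j\ne i}p_{ij}t_j(s^\ast)\le s^\ast\rho_i\,T<T$, hence $T>1$ (note $T>0$, since $T=0$ would make the $i$-th equation read $0=1$). On the other hand, the equation for the index $k$ achieving the maximum reads $T+s^\ast\sum_{j\ne k}p_{kj}t_j(s^\ast)=1$ with a nonnegative sum, because every component is $\ge 0$ at $s^\ast$; therefore $T\le 1$, a contradiction. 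Thus positivity holds on all of $[0,1]$, and back-substituting $x_j=(b_j/a_{jj})\,t_j(1)>0$ completes the proof.

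The main obstacle I anticipate is exactly this last step, and it is the reason for introducing the homotopy rather than arguing in one shot. A direct minimum/maximum analysis applied at $s=1$ fails: once some component of $\mathbf{t}$ is negative, the maximum $T$ can be arbitrarily large and the crucial bound $T\le 1$ is lost. The homotopy repairs this by catching the \emph{first} parameter at which a component vanishes, where all components are still $\ge 0$; that nonnegativity is precisely what forces the max-equation to yield $T\le 1$, colliding with the $T>1$ forced by the vanishing component. The only routine points to verify with care are the exclusion of the degenerate case $T=0$ and the consistent use of the strict row-sum inequalities $\rho_i<1$.
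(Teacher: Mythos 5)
Your argument is correct: the diagonal rescaling $A=\mathrm{diag}(b_i)\,(I+P)\,\mathrm{diag}(a_{jj}/b_j)$ is verified componentwise, hypothesis \eqref{eq} does become $\rho_i=\sum_{j\ne i}p_{ij}<1$, invertibility follows from the Neumann series, and the first-vanishing-time argument at $s^\ast$ is sound (the set where positivity fails is closed, so $s^\ast$ is attained with $\mathbf{t}(s^\ast)\ge\mathbf{0}$, and the two equations at the vanishing index and at the argmax force $T>1$ and $T\le 1$ simultaneously). Note, however, that the paper does not prove this lemma at all; it is quoted from Kaykobad (1985), and the only trace of Kaykobad's method in the paper is Remark 5.1, where the solution is produced by the Jacobi-type iteration $\alpha^{(n+1)}=\alpha^{(n)}+A_D^{-1}(\mathbf{b}-A\alpha^{(n)})$. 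In your normalized variables that iteration is $\mathbf{t}^{(n+1)}=\mathbf{1}-P\mathbf{t}^{(n)}$, a contraction in the $\infty$-norm since $\|P\|_\infty<1$, and positivity of the limit falls out of the series
\begin{equation*}
(I+P)^{-1}\mathbf{1}=\sum_{k\ge 0}(-P)^{k}\mathbf{1}=\sum_{k\ge 0}P^{2k}\bigl(\mathbf{1}-P\mathbf{1}\bigr)\ \ge\ \mathbf{1}-P\mathbf{1}>\mathbf{0},
\end{equation*}
since $P\ge 0$ and $\mathbf{1}-P\mathbf{1}=(1-\rho_i)_i>\mathbf{0}$. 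So the constructive route buys an algorithm (and a one-line positivity proof from the paired Neumann terms), while your homotopy buys a purely qualitative continuity argument that never needs the series for positivity; both hinge on exactly the same normalization and the strict row-sum condition. Your proof stands as a correct, self-contained replacement for the external citation.
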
      
Using Lemma \ref{th3}, the next result provides necessary conditions for the occurrence of boundary solutions under Models [M1]-[M5] in square two-way incomplete tables.
\begin{theorem}\label{th4}
For an $I\times I\times 2\times 2$ incomplete table, consider the following conditions: 
\begin{enumerate}
\item[1.] $y_{+j21} \leq \sum_{i\neq j=1}^{I}\hat{\mu}_{ji11}\frac{y_{+i21}}{\hat{\mu}_{ii11}}\quad\textrm{for at least one}~j = 1,2,\ldots,I,$
\item[2.] $y_{i+12} \leq \sum_{j\neq i=1}^{I}\hat{\mu}_{ij11}\frac{y_{j+12}}{\hat{\mu}_{jj11}}\quad\textrm{for at least one}~i = 1,2,\ldots,I$,
\end{enumerate}
where $\hat{\mu}_{ij11}$ is the MLE of $\mu_{ij11}$. Also, let $\{\hat{\mu}_{ij11}\} > 0$, $\{y_{i+12}\} > 0$ and $\{y_{+j21}\} > 0$. Then we have the following: 
\begin{enumerate}
\item[(a)] If boundary solutions under Models [M1] and [M3] occur, then only Condition 1 holds. 
\item[(b)] If boundary solutions under Models [M2] and [M4] occur, then only Condition 2 holds. 
\item[(c)] If boundary solutions under the Model [M5] occur, then Condition 1 or Condition 2 holds.
\end{enumerate} 
\end{theorem}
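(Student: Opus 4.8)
The plan is to prove all three parts by contraposition, reducing each to Kaykobad's Lemma (Lemma \ref{th3}). First I would record the boundary criterion already extracted in the proof of Theorem \ref{th1} and in Proposition \ref{prop1}: under Models [M1] and [M3] a boundary solution occurs exactly when the MLE's $\hat{\alpha}_{i.}$ solving (\ref{eq3.12}) satisfy $\hat{\alpha}_{i.}\le 0$ for at least one level of $Y_{1}$; equivalently, \emph{no} boundary solution occurs iff $\hat{\alpha}_{i.}>0$ for every $i$. The analogous statement holds for Models [M2] and [M4] with the $\hat{\beta}_{.j}$ solving (\ref{eq3.13}), while Model [M5] is handled via the fact (noted just before the proof of Theorem \ref{th1}) that its MLE's satisfy both (\ref{eq3.12}) and (\ref{eq3.13}). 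Hence it suffices, for each part, to show that when the negation of the relevant condition holds at \emph{every} index, the corresponding MLE vector is strictly positive.

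For part (a) I would regard (\ref{eq3.12}), i.e. $\sum_{i}\hat{\mu}_{ij11}\hat{\alpha}_{i.}=y_{+j21}$ (using $N\hat{\pi}_{ij11}=\hat{\mu}_{ij11}$) for $1\le j\le I$, as a linear system $A\boldsymbol{\alpha}=\mathbf{b}$ in the unknowns $\hat{\alpha}_{i.}$, where $\mathbf{b}=(y_{+j21})$ and $A$ is the matrix of observed-cell MLE's $\hat{\mu}_{\cdot\cdot 11}$ appearing there, with diagonal entries $\hat{\mu}_{jj11}$. The structural hypotheses of Lemma \ref{th3} are immediate from the standing assumptions $\{\hat{\mu}_{ij11}\}>0$ and $\{y_{+j21}\}>0$: the diagonal is positive, the off-diagonal entries are nonnegative, and $\mathbf{b}>\mathbf{0}$. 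The crux is that denying Condition 1 for \emph{every} $j$ is precisely the strict diagonal-dominance inequality (\ref{eq}) required by the lemma; Lemma \ref{th3} then returns $\boldsymbol{\alpha}=A^{-1}\mathbf{b}>\mathbf{0}$, so every $\hat{\alpha}_{i.}>0$ and no boundary solution can occur. The contrapositive is exactly part (a). Part (b) is strictly parallel, applied to the system (\ref{eq3.13}) with right-hand side $(y_{i+12})$ and unknowns $\hat{\beta}_{.j}$: negating Condition 2 at every $i$ reproduces (\ref{eq}) term-for-term, and the lemma forces every $\hat{\beta}_{.j}>0$.

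For part (c) I would use that under Model [M5] the estimates $\hat{\alpha}_{i.}$ and $\hat{\beta}_{.j}$ satisfy both (\ref{eq3.12}) and (\ref{eq3.13}) (with the Model [M5] values of $\hat{\mu}_{ij11}$). If \emph{neither} Condition 1 nor Condition 2 holds, then both negations hold at all indices, so parts (a) and (b) simultaneously give $\hat{\alpha}_{i.}>0$ for all $i$ and $\hat{\beta}_{.j}>0$ for all $j$; by Definition \ref{def1} no boundary solution occurs under [M5]. Taking the contrapositive yields exactly ``Condition 1 or Condition 2 holds,'' which is part (c).

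The main obstacle I anticipate is bookkeeping rather than depth: one must line up the coefficient matrix of each MLE system with the orientation in which the subscripts of $\hat{\mu}_{\cdot\cdot 11}$ are written in Conditions 1 and 2, singling out the diagonal term $\hat{\mu}_{jj11}\hat{\alpha}_{j.}$ (resp. $\hat{\mu}_{ii11}\hat{\beta}_{.i}$) so that the ``$j\neq i$'' off-diagonal sum in (\ref{eq}) matches the sum in the condition exactly. I would also flag why this gives only a \emph{necessary}, not sufficient, condition: Lemma \ref{th3} states a \emph{sufficient} condition for positivity of the entire solution vector, so its \emph{failure} is only necessary for some coordinate to be nonpositive; there remains a gap in which a stated condition can hold while all the $\hat{\alpha}_{i.}$ (or $\hat{\beta}_{.j}$) stay positive, which is precisely the slack that the counterexample following the theorem exploits.
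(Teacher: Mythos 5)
Your proposal is correct and follows essentially the same route as the paper: read the MLE likelihood equations (\ref{eq3.12})--(\ref{eq3.13}) as the linear systems $A^{T}\alpha=\mathbf{b}$ and $A\beta=\mathbf{b}^{\ast}$ with $A=(\hat{\mu}_{ij11})$, note that the negation of Condition 1 (resp.\ 2) at every index is exactly hypothesis (\ref{eq}) of Kaykobad's Lemma \ref{th3}, conclude strict positivity of all $\hat{\alpha}_{i.}$ (resp.\ $\hat{\beta}_{.j}$) so that no boundary solution occurs, and take contrapositives, with part (c) obtained by combining the two systems for Model [M5]. Your closing remarks on the transpose bookkeeping and on why the argument yields only necessity are accurate and match the paper's discussion.
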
 
\begin{proof}
See Appendix A3.
\end{proof}
Henceforth, we denote $A = (a_{ij}) = (\hat{\mu}_{ij11})$, ${\bf b} = (b_{j}) = (y_{+j21})$ and ${\bf b^{\ast}} = (b^{\ast}_{i}) = (y_{i+12})$ for $1\leq i\leq I,~1\leq j\leq I$. The example below is an application of Theorem \ref{th4}. 
\begin{example}\label{ex6}
From Table 6 in Example \ref{ex5}, we have the following:
\begin{equation*}
A = 
\begin{pmatrix}
621 & 290 & 284 \\
260 & 131 & 117 \\
93 & 30 & 18 
\end{pmatrix},\quad {\bf b} = (456, 156, 125),\quad {\bf b^{\ast}} = (135,60,20).
\end{equation*} 
The MLE's ${\bf \hat{\alpha}} = (\hat{\alpha}_{i.})$ and ${\bf \hat{\beta}} = (\hat{\beta}_{.j})$ under Model [M5] satisfy respectively the systems $A^{T}{\bf \alpha} = {\bf b}$ from (\ref{eq4.3}) and $A{\bf \beta} = {\bf b^{\ast}}$ from (\ref{eq4.4}) for $i,j = 1,2,3$. From Table 7, we observe that if Model [M5] is fitted to the data in Table 9, then we obtain $\hat{\alpha}_{2.} < 0$ and $\hat{\beta}_{.3} < 0$, that is, boundary solutions occur. Now we need to verify if both Conditions 1 and 2 of Theorem \ref{th4} hold. For the matrix $A^{T}$ and the vector ${\bf b}$, we have
\begin{eqnarray*}
456 &<& a_{12}\times\frac{b_{2}}{a_{22}} + a_{13}\times\frac{b_{3}}{a_{33}} = 260\times\frac{156}{131} + 93\times\frac{125}{18} = 955.4516, \\
156 &<& a_{21}\times\frac{b_{1}}{a_{11}} + a_{23}\times\frac{b_{3}}{a_{33}} = 290\times\frac{456}{621} + 30\times\frac{125}{18} = 421.2802, \\
125 &<& a_{31}\times\frac{b_{1}}{a_{11}} + a_{32}\times\frac{b_{2}}{a_{22}} = 284\times\frac{456}{621} + 117\times\frac{156}{131} = 347.8693,
\end{eqnarray*}
so that Condition 1 in Theorem \ref{th4} is satisfied. Also, for the matrix $A$ and the vector ${\bf b^{\ast}}$, we have 
\begin{eqnarray*}
135 &<& a_{12}\times\frac{b^{\ast}_{2}}{a_{22}} + a_{13}\times\frac{b^{\ast}_{3}}{a_{33}} = 290\times\frac{60}{131} + 284\times\frac{20}{18} = 448.38, \\
60 &<& a_{21}\times\frac{b^{\ast}_{1}}{a_{11}} + a_{23}\times\frac{b^{\ast}_{3}}{a_{33}} = 260\times\frac{135}{621} + 117\times\frac{20}{18} = 186.5217, \\
20 &<& a_{31}\times\frac{b^{\ast}_{1}}{a_{11}} + a_{32}\times\frac{b^{\ast}_{2}}{a_{22}} = 93\times\frac{135}{621} + 30\times\frac{60}{131} = 33.9578,
\end{eqnarray*}
so that Condition 2 in Theorem \ref{th4} is satisfied. Further, from Table 7, we observe that boundary solutions also occur if Models [M1]-[M4] are fitted to data in Table 6. Then only Condition 1 is satisfied if boundary solutions under [M1] and [M3] occur, while only Condition 2 is satisfied if boundary solutions under [M2] and [M4] occur. This is because the MLE $\hat{\alpha} = (\hat{\alpha}_{i.})$ under Models [M1] and [M3] satisfies the system $A^{T}\alpha = {\bf b}$, while the MLE $\hat{\beta} = (\hat{\beta}_{.j})$ under Models [M2] and [M4] satisfies the system $A\beta = {\bf b^{\ast}}$. 
\end{example}

\subsection{The necessary conditions are not sufficient} 
The next example shows that the necessary conditions for the occurrence of boundary solutions in Theorem \ref{th4} are not sufficient.
\begin{example}\label{ex7}
In Example \ref{ex6}, replace $456$ by $366$ in ${\bf b}$ and $20$ by $15$ in ${\bf b^{\ast}}$ so that ${\bf b} = (366,156,125)$ and ${\bf b^{\ast}} = (135,60,15)$ now. For the matrix $A^{T}$ and the vector ${\bf b}$, we have
\begin{eqnarray*}
366 &<& a_{12}\times\frac{b_{2}}{a_{22}} + a_{13}\times\frac{b_{3}}{a_{33}} = 260\times\frac{156}{131} + 93\times\frac{125}{18} = 955.4516, \\
156 &<& a_{21}\times\frac{b_{1}}{a_{11}} + a_{23}\times\frac{b_{3}}{a_{33}} = 290\times\frac{366}{621} + 30\times\frac{125}{18} = 379.2512, \\
125 &<& a_{31}\times\frac{b_{1}}{a_{11}} + a_{32}\times\frac{b_{2}}{a_{22}} = 284\times\frac{366}{621} + 117\times\frac{156}{131} = 306.7099,
\end{eqnarray*}
so that Condition 1 in Theorem \ref{th4} is satisfied. Also, for the matrix $A$ and the vector ${\bf b^{\ast}}$, we have 
\begin{eqnarray*}
135 &<& a_{12}\times\frac{b^{\ast}_{2}}{a_{22}} + a_{13}\times\frac{b^{\ast}_{3}}{a_{33}} = 290\times\frac{60}{131} + 284\times\frac{15}{18} = 369.4911, \\
60 &<& a_{21}\times\frac{b^{\ast}_{1}}{a_{11}} + a_{23}\times\frac{b^{\ast}_{3}}{a_{33}} = 260\times\frac{135}{621} + 117\times\frac{15}{18} = 154.0217, \\
15 &<& a_{31}\times\frac{b^{\ast}_{1}}{a_{11}} + a_{32}\times\frac{b^{\ast}_{2}}{a_{22}} = 93\times\frac{135}{621} + 30\times\frac{60}{131} = 33.9578,
\end{eqnarray*}
so that Condition 2 in Theorem \ref{th4} is satisfied. Now, when we solve the system $A^{T}{\bf \alpha} = {\bf b}$, then we obtain the MLE's $\hat{\alpha}_{1.} = 0.0133$, $\hat{\alpha}_{2.} = 0.7796$ and $\hat{\alpha}_{3.} = 1.6671$. So, there are no boundary solutions under Model [M3]. Similarly, the system $A{\bf \beta} = {\bf b^{\ast}}$ yields the MLE's $\hat{\beta}_{.1} = 0.041$, $\hat{\beta}_{.2} = 0.3655$ and $\hat{\beta}_{.3} = 0.0126$, that is, there are no boundary solutions under Model [M4]. Since the MLE's in Model [M5] satisfy both the systems $A^{T}{\bf \alpha} = {\bf b}$ and $A{\bf \beta} = {\bf b^{\ast}}$, there are no boundary solutions under [M5] as well. Similar results hold for Models [M1] and [M2]. Hence, the conditions in Theorem \ref{th4} are not sufficient for the occurrence of boundary solutions under Models [M1]-[M5]. 
\end{example}

\subsection{Importance of the necessary conditions}
Here, we discuss additional details about Theorem \ref{th4} and discuss its simplicity and effectiveness. 

From Theorem \ref{th4}, note that if $\{y_{i+12}\}$, $\{y_{+j21}\}$, and/or $\{\hat{\mu}_{ii11}\}$ are large, then Conditions 1 and 2 may not hold. Indeed, if the inequalities in Conditions 1 and 2 are reversed for all $1\leq i\leq I$ and $1\leq j\leq I$, then from statements (a), (b) and (c) of Theorem \ref{th4}, boundary solutions do not occur on fitting Models [M1]-[M5] in an $I\times I\times 2\times 2$ incomplete table. 

It is known that when boundary solutions occur, perfect fit models (here Models M3], [M4] and [M5]) cannot reproduce the observed counts, indicating poor fit and imprecision of the parameter estimates. The MLE's of the parameters under NMAR models lie on the boundary of the parameter space and the log likelihood function tends to be flat, which makes derivation of the MLE's computationally intensive. Also, the corresponding covariance matrix has unreasonable eigenvalues (close to either zero or negative), which implies the estimated standard errors for some parameter estimates are large. Hence, for model selection, we prefer NMAR models which don't yield boundary solutions upon fitting them to the given data. 

Theorem \ref{th1} provides conditions, which help us identify the occurrence of boundary solutions. However, boundary solutions may occur under some NMAR models if any of the sufficient conditions in Theorem \ref{th1} does not hold. This implies that Theorem \ref{th1} cannot always provide us the set of plausible NMAR models for model selection. However, note that Theorem \ref{th4} is very useful in this regard since it gives us an insight into verifying the non-occurrence of boundary solutions under each of the NMAR models [M1]-[M5]. That is, if any of the necessary conditions in Theorem \ref{th4} does not hold, then we know for sure that boundary solutions do not occur. This always helps us to obtain the list of candidate NMAR models suitable for fitting the given data. Hence, Theorem \ref{th4} is more reliable than Theorem \ref{th1} for the purpose of model selection in square two-way incomplete tables.

The non-boundary MLE's of $\mu_{ij11}$ are $\hat{\mu}_{ij11} = \frac{y_{ij11}y_{i+1+}y_{++11}}{y_{i+11}y_{++1+}}$ under Model [M1], $\hat{\mu}_{ij11} = \frac{y_{ij11}y_{+j+1}y_{++11}}{y_{+j11}y_{+++1}}$ under Model [M2], and $\hat{\mu}_{ij11} = y_{ij11}$ under Models [M3], [M4] and [M5] (see pp.~647-648 of Baker {\it et al.} (1992)), which involve only the observed cell counts and their sums. Hence, from Theorem \ref{th4}, there is no need to solve any system of likelihood equations, use the EM algorithm or compute odds (based on the observed (joint/marginal) cell counts) to check for the non-occurrence of boundary solutions in an $I\times I\times 2\times 2$ incomplete table.

\begin{remark}
If $A_{D} = \text{diag}(a_{11},\ldots,a_{II})$, then from Kaykobad (1985), the solutions $\alpha = (\alpha_{i.})$ of the system $A^{T}{\alpha} = {\bf b}$ may be obtained iteratively as follows. 
\begin{eqnarray}\label{eq4.5}
\alpha^{(0)} &=& A_{D}^{-1}{\bf b} \nonumber \\
\alpha^{(n+1)} &=& \alpha^{(n)} + A_{D}^{-1}({\bf b} - A^{T}\alpha^{(n)}),\quad n = 0,1,2,\ldots .
\end{eqnarray}
Similarly, the solutions $\beta = (\beta_{.j})$ of the system $A{\beta} = {\bf b^{\ast}}$ may be obtained iteratively as follows. 
\begin{eqnarray}\label{eq4.6}
\beta^{(0)} &=& A_{D}^{-1}{\bf b^{\ast}} \nonumber \\
\beta^{(n+1)} &=& \beta^{(n)} + A_{D}^{-1}({\bf b^{\ast}} - A\beta^{(n)}),\quad n = 0,1,2,\ldots .
\end{eqnarray}
Both the sequences (\ref{eq4.5}) and (\ref{eq4.6}) converge to the solutions of the respective systems. 
\end{remark}

\section{Conclusions}
In this paper, we have discussed the problem of boundary solutions that occur under various NMAR models for an $I\times J\times 2\times 2$ table. We formally define boundary solutions for such a table and provide a result (Proposition 3.1) that theoretically connects and justifies various forms of these solutions under alternative parametrizations of the missing data models. This eliminates the need of using the EM algorithm (see pp.~39-40 of Park {\it et al.} (2014)) to empirically obtain the forms of the solutions in two-way incomplete tables. The above result is then used to improve a claim in Baker {\it et al.}~(1992) regarding the occurrence of boundary solutions. We give the precise forms of such solutions by just noting the corresponding level (s) of the variable (s) in the table, which reduces computational burden.

As discussed earlier, boundary solutions pose a lot of problems for estimation and inference under NMAR models in incomplete tables. Hence, it is important to investigate sufficient and necessary conditions for their occurrence in such tables. We have provided a result (Theorem 5.1) on sufficient conditions for the occurrence of boundary solutions in an $I\times J\times 2\times 2$ table. While Park et al. (2014) consider only Model [M5], we consider Models [M1]-[M5] in Theorem 5.1. We use a similar approach but give direct arguments instead of contrapositive ones used in Theorem 1 of Park {\it et al.}~(2014) for proving Theorem 5.1. Kim and Park~(2014) conjectured that these conditions would also be necessary for general two-way incomplete tables. However, we show by a counterexample that this is not the case for $I,J\geq 3$, thereby disproving the conjecture. 

We have also established necessary conditions in Theorem 5.2 for the occurrence of boundary solutions in an $I\times J\times 2\times 2$ table, which have not been discussed in the literature so far. As discussed in Section 5.5, these conditions are of practical utility to identify the non-occurrence of boundary solutions and hence for model selection. However, we show by a counterexample that these conditions are not sufficient. Note that a major advantage of the proposed sufficient conditions and necessary conditions is that they depend only on the observed cell counts in the table or their sums. As mentioned in Park {\it et al.}~(2014), this makes the verification process much easier, and avoids using the EM algorithm or solving likelihood equations. Finally, all the above results are illustrated using six data analysis examples. It would be helpful to obtain a set of conditions involving only the observed cell counts, which are sufficient as well as necessary for the occurrence of boundary solutions in two-way incomplete tables with both variables missing.  
\vone\noindent
{\bf Acknowledgements} : The authors are grateful to the referees for carefully reading the manuscript and suggesting numerous improvements.

\section*{Appendix}
\subsection*{A1}{\it Proof of Proposition 3.1:} From Definition \ref{def1}, it follows that if boundary solutions occur under the Models [M1]-[M5], then the MLE's of the cell probabilities except some of the nonresponse ones are all non-zero. On substituting $k=l=1$ (for response cell probabilities) in the above models and using the parameter constraints, we can then deduce that the MLE's of the constant, the main effects and the association terms between $Y_{i}$'s, between $R_{i}$'s, and between $Y_{i}$ and $R_{j}$ for $i\neq j$ are all finite. This is because non-zero terms (response cell probabilities) on the LHS of the log-linear models imply that the log-linear parameters on the RHS are finite. 
\vone\noindent  
Consider part 1 first. For the Models [M1] and [M3], the log-linear parameters modelling the non-ignorable nonresponse (NMAR) mechanism of $Y_{1}$ are $\lambda_{R_{1}}(k)$ and $\lambda_{Y_{1}R_{1}}(i,k)$. If boundary solutions occur, then they are of the form $\hat{\pi}_{ij2l} = 0$ (see point 1 of Definition \ref{def1}), which implies $\hat{\lambda}_{Y_{1}R_{1}}(i,2) = -\infty$ for at least one $i$ since the other parameters are finite as mentioned above. Then under Model [M1], we have
\begin{eqnarray*}
\hat{\pi}_{i+2+} &=& \sum_{j,l}\hat{\pi}_{ij2l} \nonumber \\
&=& \frac{1}{N}\sum_{j,l}\exp\{\hat{\lambda} + \hat{\lambda}_{Y_{1}}(i) + \hat{\lambda}_{Y_{2}}(j) + \hat{\lambda}_{R_{1}}(2) + \hat{\lambda}_{R_{2}}(l) + \hat{\lambda}_{Y_{1}R_{1}}(i,2) + \hat{\lambda}_{Y_{1}Y_{2}}(i,j) \nonumber \\
& & + \hat{\lambda}_{R_{1}R_{2}}(2,l)\} \nonumber \\
&=& 0 \nonumber
\end{eqnarray*}
for at least one $i$. Conversely, we have
\begin{align*}
&\hat{\pi}_{i+2+} = 0 \quad\text{(for at least one $i$)}\nonumber \\
&\Rightarrow \sum_{j,l}\exp\{\hat{\lambda} + \hat{\lambda}_{Y_{1}}(i) + \hat{\lambda}_{Y_{2}}(j) + \hat{\lambda}_{R_{1}}(2) + \hat{\lambda}_{R_{2}}(l) + \hat{\lambda}_{Y_{1}R_{1}}(i,2) + \hat{\lambda}_{Y_{1}Y_{2}}(i,j) + \hat{\lambda}_{R_{1}R_{2}}(2,l)\} = 0 \nonumber \\
&\Rightarrow\hat{\lambda}_{Y_{1}R_{1}}(i,2) = -\infty\quad\text{for at least one $i$}, \nonumber
\end{align*}
so that $\hat{\lambda}_{Y_{1}R_{1}}(i,2) = -\infty\Leftrightarrow\hat{\pi}_{i+2+} = 0$ for at least one $i$ under Model [M1]. The same can be shown for Model [M3]. Under Models [M1] and [M3], $a_{ij} = \exp[2\{\lambda_{R_{1}}(2) + \lambda_{Y_{1}R_{1}}(i,2) + \lambda_{R_{1}R_{2}}(2,1)\}]$. Since $a_{ij}$ depends only on $i$, we have $a_{ij} = \alpha_{i.}$. It is clear that $\hat{\alpha}_{i.} = 0\Leftrightarrow\hat{\lambda}_{Y_{1}R_{1}}(i,2) = -\infty$. Also, note that by definition of $a_{ij}$, if $\hat{\alpha}_{i.} = 0$ for all $1\leq i\leq I$, then $y_{+j21} = 0$ for all $1\leq j\leq J$, which is a contradiction since supplementary margins are assumed to be positive. Hence, under Models [M1] and [M3], boundary solutions are given by $\hat{\lambda}_{Y_{1}R_{1}}(i,2) = -\infty\Leftrightarrow\hat{\pi}_{i+2+} = 0\Leftrightarrow\hat{\alpha}_{i.} = 0$ for at least one and at most $(I-1)$ values of $Y_{1}$. 
\vone\noindent
Consider part 2 now. Under Models [M2] and [M4], the log-linear parameters modelling the NMAR nechanism of $Y_{2}$ are $\lambda_{R_{2}}(l)$ and $\lambda_{Y_{2}R_{2}}(j,l)$. Also, $b_{ij} = \exp[2\{\lambda_{R_{2}}(2) + \lambda_{Y_{2}R_{2}}(j,2) + \lambda_{R_{1}R_{2}}(1,2)\}]$. Since $b_{ij}$ depends only on $j$, we have $b_{ij} = \beta_{.j}$. Then it can be shown similarly as above that boundary solutions in this case are given by $\hat{\lambda}_{Y_{2}R_{2}}(j,2) = -\infty\Leftrightarrow\hat{\pi}_{+j+2} = 0\Leftrightarrow\hat{\beta}_{.j} = 0$ for at least one and at most $(J-1)$ values of $Y_{2}$. 
\vone\noindent
Finally, consider part 3. Under Model [M5], the log-linear parameters modelling the NMAR nechanisms of $Y_{1}$ and $Y_{2}$ are $\lambda_{R_{1}}(k)$, $\lambda_{R_{2}}(l)$, $\lambda_{Y_{1}R_{1}}(i,k)$ and $\lambda_{Y_{2}R_{2}}(j,l)$. The proof for the form of boundary solutions under Model [M5] follows on similar lines as for Models [M1]-[M4] shown above.

\subsection*{A2}{\it Proof of Theorem 5.1:} From Baker {\it et al.}~(1992), the MLE's $\hat{\alpha}_{i.}$ under the NMAR model for only $Y_{1}$ (Models [M1] and [M3]) satisfy
\begin{equation}\label{eq3.12}
\sum_{i}N\hat{\pi}_{ij11}\hat{\alpha}_{i.} = y_{+j21}, \quad \forall~1\leq j\leq I,
\end{equation}
while the MLE's $\hat{\beta}_{.j}$ under the NMAR model for only $Y_{2}$ (Models [M2] and [M4]) satisfy
\begin{equation}\label{eq3.13}
\sum_{j}N\hat{\pi}_{ij11}\hat{\beta}_{.j} = y_{i+12}, \quad \forall~1\leq i\leq I.
\end{equation}
The  MLE's $\hat{\alpha}_{i.}$ and $\hat{\beta}_{.j}$ under the NMAR model for both $Y_{1}$ and $Y_{2}$ (Model [M5]) satisfy both (\ref{eq3.12}) and (\ref{eq3.13}). Note that boundary solutions in Models [M1] and [M3] occur if $\hat{\alpha}_{i.} \leq 0$ for at least one and at most $(I - 1)$ values of $Y_{1}$, while boundary solutions in Models [M2] and [M4] occur if $\hat{\beta}_{.j} \leq 0$ for at least one and at most $(I - 1)$ values of $Y_{2}$. Also note that boundary solutions under [M5] occur if at least one of the following holds:
\begin{enumerate}
\item[(i)] $\hat{\alpha}_{i.} \leq 0$ for at least one and at most $(I-1)$ values of $Y_{1}$, 
\item[(ii)] $\hat{\beta}_{.j} \leq 0$ for at least one and at most $(I-1)$ values of $Y_{2}$.
\end{enumerate}
From (\ref{eq3.1}) and (\ref{eq3.12}), we have
\begin{equation*}
\nu(j,j') = \frac{y_{+j21}}{y_{+j'21}} = \frac{\sum_{i}\hat{\pi}_{ij11}\hat{\alpha}_{i.}}{\sum_{i}\hat{\pi}_{ij'11}\hat{\alpha}_{i.}},
\end{equation*} 
\begin{equation}\label{eq3.14}
\nu_{m}(j,j') - \nu(j,j') = \frac{\sum_{i\neq m_{1}}(\hat{\pi}_{m_{1}j11}\hat{\pi}_{ij'11} - \hat{\pi}_{m_{1}j'11}\hat{\pi}_{ij11})\hat{\alpha}_{i.}}{\hat{\pi}_{m_{1}j'11}\sum_{i}\hat{\pi}_{ij'11}\hat{\alpha}_{i.}},
\end{equation}
\begin{equation}\label{eq3.15}
\nu(j,j') - \nu_{n}(j,j') = \frac{\sum_{i\neq n_{1}}(\hat{\pi}_{n_{1}j'11}\hat{\pi}_{ij11} - \hat{\pi}_{n_{1}j11}\hat{\pi}_{ij'11})\hat{\alpha}_{i.}}{\hat{\pi}_{n_{1}j'11}\sum_{i}\hat{\pi}_{ij'11}\hat{\alpha}_{i.}},
\end{equation}
where $m_{1}$ and $n_{1}$ are the levels of $Y_{1}$ corresponding to $\nu_{m}(j,j')$ and $\nu_{n}(j,j')$ respectively. From (\ref{eq3.1}), we get
\begin{equation}\label{eq3.16}
\nu_{n}(j,j') = \frac{\hat{\pi}_{n_{1}j11}}{\hat{\pi}_{n_{1}j'11}} < \nu_{i}(j,j') = \frac{\hat{\pi}_{ij11}}{\hat{\pi}_{ij'11}} < \nu_{m}(j,j') = \frac{\hat{\pi}_{m_{1}j11}}{\hat{\pi}_{m_{1}j'11}}. 
\end{equation} 
From (\ref{eq3.16}), we have the following inequalities
\begin{equation}\label{eq3.17}
\hat{\pi}_{m_{1}j11}\hat{\pi}_{ij'11} > \hat{\pi}_{m_{1}j'11}\hat{\pi}_{ij11},~\hat{\pi}_{n_{1}j'11}\hat{\pi}_{ij11} > \hat{\pi}_{n_{1}j11}\hat{\pi}_{ij'11}~\textrm{for}~i\neq m_{1},n_{1}.
\end{equation}
Consider part (a). Suppose Condition 1 holds, which implies that (\ref{eq3.14}) and (\ref{eq3.15}) are of opposite signs. Using this fact and (\ref{eq3.17}), we observe that $\hat{\alpha}_{i.} < 0$ for at least one and at most $(I - 1)$ values of $Y_{1}$, that is, boundary solutions of the form $\hat{\pi}_{i+2+} = 0$ occur. 

Again from (\ref{eq3.2}) and (\ref{eq3.13}), we have
\begin{equation*}
\omega(i,i') = \frac{y_{i+12}}{y_{i'+12}} = \frac{\sum_{j}\hat{\pi}_{ij11}\hat{\beta}_{.j}}{\sum_{j}\hat{\pi}_{i'j11}\hat{\beta}_{.j}},
\end{equation*} 
\begin{equation}\label{eq3.18}
\omega_{m}(i,i') - \omega(i,i') = \frac{\sum_{j\neq m_{2}}(\hat{\pi}_{im_{2}11}\hat{\pi}_{i'j11} - \hat{\pi}_{i'm_{2}11}\hat{\pi}_{ij11})\hat{\beta}_{.j}}{\hat{\pi}_{i'm_{2}11}\sum_{i}\hat{\pi}_{i'j11}\hat{\beta}_{.j}},
\end{equation}
\begin{equation}\label{eq3.19}
\omega(i,i') - \omega_{n}(i,i') = \frac{\sum_{j\neq n_{2}}(\hat{\pi}_{i'n_{2}11}\hat{\pi}_{ij11} - \hat{\pi}_{in_{2}11}\hat{\pi}_{i'j11})\hat{\beta}_{.j}}{\hat{\pi}_{i'n_{2}11}\sum_{i}\hat{\pi}_{i'j11}\hat{\beta}_{.j}},
\end{equation}
where $m_{2}$ and $n_{2}$ are the levels of $Y_{2}$ corresponding to $\omega_{m}(i,i')$ and $\omega_{n}(i,i')$ respectively. From (\ref{eq3.2}), we get
\begin{equation}\label{eq3.20}
\omega_{n}(i,i') = \frac{\hat{\pi}_{in_{2}11}}{\hat{\pi}_{i'n_{2}11}} < \omega_{j}(i,i') = \frac{\hat{\pi}_{ij11}}{\hat{\pi}_{i'j11}} < \omega_{m}(i,i') = \frac{\hat{\pi}_{im_{2}11}}{\hat{\pi}_{i'm_{2}11}}. 
\end{equation} 
From (\ref{eq3.20}), we have the following inequalities
\begin{equation}\label{eq3.21}
\hat{\pi}_{m_{2}j11}\hat{\pi}_{ij'11} > \hat{\pi}_{m_{2}j'11}\hat{\pi}_{ij11},~\hat{\pi}_{n_{2}j'11}\hat{\pi}_{ij11} > \hat{\pi}_{n_{2}j11}\hat{\pi}_{ij'11}~\textrm{for}~j\neq m_{2},n_{2}.
\end{equation}
Now consider part (b). Assume Condition 2 holds, which implies that (\ref{eq3.18}) and (\ref{eq3.19}) are of opposite signs. Using this fact and (\ref{eq3.21}), we observe that $\hat{\beta}_{.j} < 0$ for at least one and at most $(I - 1)$ values of $Y_{2}$, that is, boundary solutions of the form $\hat{\pi}_{+j+2} = 0$ occur. 

Finally consider part (c). Assume at least one of Conditions 1 and 2 holds. The cases when only Condition 1 holds or only Condition 2 holds follow from the proofs of part (a) and part (b) respectively. So it is sufficient here to assume both Conditions 1 and 2 hold. This implies, from part (a), $\hat{\alpha}_{i.} < 0$ for at least one and at most $(I - 1)$ values of $Y_{1}$, that is, boundary solutions of the form $\hat{\pi}_{i+2+} = 0$ occur. Also from part (b), we have $\hat{\beta}_{.j} < 0$ for at least one and at most $(I - 1)$ values of $Y_{2}$, that is, boundary solutions of the form $\hat{\pi}_{+j+2} = 0$ occur.  
This completes the proof.

\subsection*{A3}{\it Proof of Theorem 5.2:} From Theorem \ref{th1}, the MLE's $\hat{\alpha}_{i.}$ and $\hat{\beta}_{.j}$ under Model [M5] satisfy
\begin{equation}\label{eq4.3}
\sum_{i}\hat{\mu}_{ij11}\hat{\alpha}_{i.} = y_{+j21} \quad \textrm{for}~j = 1,\ldots, I,
\end{equation}
\begin{equation}\label{eq4.4}
\sum_{j}\hat{\mu}_{ij11}\hat{\beta}_{.j} = y_{i+12} \quad \textrm{for}~i = 1,\ldots, I.
\end{equation}
Also, the MLE $\hat{\alpha}_{i.}$ under Models [M1] and [M3] satisfy (\ref{eq4.3}) only, while the MLE $\hat{\beta}_{.j}$ under Models [M2] and [M4] satisfy (\ref{eq4.4}) only. Note that boundary solutions under [M5] occur if at least one of the following conditions hold:
\begin{enumerate}
\item[(i)] $\hat{\alpha}_{i.} \leq 0$ for at least one and at most $(I-1)$ values of $Y_{1}$, 
\item[(ii)] $\hat{\beta}_{.j} \leq 0$ for at least one and at most $(I-1)$ values of $Y_{2}$.
\end{enumerate}
Also, boundary solutions in Models [M1] and [M3] are given by only Condition (i), while boundary solutions in Models [M2] and [M4] are given by only Condition (ii). In Lemma \ref{th3}, take $A = (\hat{\mu}_{ij11})$, ${\bf b} = (b_{j}) = (y_{+j21})$ and ${\bf b^{\ast}} = (b^{\ast}_{i}) = (y_{i+12})$ for $1\leq i\leq I,~1\leq j\leq I$. Then (\ref{eq4.3}) may be written as $A^{T}\alpha = {\bf b}$, while (\ref{eq4.4}) may be written as $A\beta = {\bf b^{\ast}}$, where $\alpha = (\alpha_{i.})$ and $\beta = (\beta_{.j})$. We prove Theorem \ref{th4} by contrapositive.

Consider part (a) first. Suppose Condition 1 in Theorem \ref{th4} does not hold. Then by Lemma \ref{th3}, $\alpha = (A^{T})^{-1}{\bf b} > {\bf 0}$. In other words, $\hat{\alpha}_{i.} > 0$ for all $1\leq i\leq I$, that is, boundary solutions under Models [M1] and [M3] do not occur. 

Consider part (b) now. Assume Condition 2 in Theorem \ref{th4} does not hold. Then by Lemma \ref{th3}, $\beta = A^{-1}{\bf b}^{\ast} > {\bf 0}$. In other words, $\hat{\beta}_{.j} > 0$ for all $1\leq j\leq I$, that is, boundary solutions under Models [M2] and [M4] do not occur. 

Finally consider part (c). Assume both Conditions 1 and 2 in Theorem \ref{th4} do not hold. Then by Lemma \ref{th3}, both $\hat{\alpha}_{i.} > 0$ and $\hat{\beta}_{.j} > 0$ for all $1\leq i\leq I,~1\leq j\leq I$, that is, boundary solutions under Model [M5] do not occur. 

Hence, the result follows.

\end{document}